\newtheorem{theorem}{Theorem}
\newtheorem{lemma}{Lemma}
\newtheorem{proposition}{Proposition}
\newtheorem{problem}{Problem}
\begin{document}

\title{\bf Steady flows of ideal incompressible fluid}

\author[V. Yu. Rovenski and V. A. Sharafutdinov]{Vladimir Yu. Rovenski and Vladimir A. Sharafutdinov}


\address{Sobolev Institute of Mathematics; 4 Koptyug Avenue, Novosibirsk, 630090, Russia.}
\email{sharaf@math.nsc.ru}

\address{Department of Mathematics, University of Haifa,
3498838 Haifa, Israel.}
\email{vrovenski@univ.haifa.ac.il}

\begin{abstract}
{A new important relation between fluid mechanics and differential geometry is established.}
We study smooth {steady} solutions to the Euler equations with {the} additional property:
the velocity vector is orthogonal to the gradient of the pressure at any point. {Such solutions are called Gavrilov flows.}
Local structure of a Gavrilov flow {is} described in terms of geometry of isobaric hypersurfaces.
In the 3D case, we obtain a system of PDEs for an axisymmetric Gavrilov flow and find consistency conditions for the system.
Two numerical examples of axisymmetric Gavrilov flows are presented:
with pressure function periodic in {the} axial direction, and
with isobaric surfaces diffeomorphic to {the} torus.

\vskip1.mm\noindent
\textbf{Keywords}:
Euler equations, ideal fluid, Gavrilov flow, geodesic vector field

\vskip1.mm
\noindent
\textbf{Mathematics Subject Classifications (2010)} 76B03, 76M99, 76A02, 53Z05
\end{abstract}

\maketitle

\section{Introduction}

In dimensions 2 and 3, the Euler equations
\begin{eqnarray}
 &u\cdot\nabla u+{\rm grad}\,p=0,
                          \label{1.1}\\
 &\nabla\cdot u=0
                          \label{1.2}
\end{eqnarray}
describe steady flows of ideal incompressible fluid. The equations are also of some mathematical interest in an arbitrary dimension. Here $u=\big(u_1(x),\dots,u_n(x)\big)$ is a vector field on an open set $U\subset{\mathbb R}^n$ (the fluid velocity) and $p$ is a scalar function on $U$ (the pressure).
We consider only smooth real solutions to the Euler equations, i.e., $u_i\in C^\infty(U)\ (i=1,\dots,n)$ and $p\in C^\infty(U)$ are assumed to be real functions (the term ``smooth" is used as the synonym of ``$C^\infty$").
We say that a solution $(u,p)$ to \eqref{1.1}--\eqref{1.2} is a {\it {Gavrilov flow}} if it satisfies
\begin{equation}
u\cdot{\rm grad}\,p=0,
                          \label{1.3}
\end{equation}
i.e., the velocity is orthogonal to the pressure gradient at all points of $U$. {We use also the abbreviation GF for ``Gavrilov flow''.}

Equations \eqref{1.1}--\eqref{1.3} constitute {the} overdetermined system of first order differential equations: $n+2$ equations in $n+1$ unknown function. Therefore every GF is an exception in a certain sense. Nevertheless, such flows exist and deserve study.
Such flows satisfy the following important property: a pair of functions $(\tilde u,\tilde p)$ given~by
\begin{equation}
\tilde u=\varphi(p)u,\quad {{\rm grad}}\,\tilde p=\varphi^2(p){\rm grad}\,p,
                          \label{1.4}
\end{equation}
where $\varphi(p)$ is an arbitrary smooth function, is again a GF.
This property underlies the following construction that will be called the {\it Gavrilov localization}.
{Given a GF in a domain $U$,}
let $p_0\in U$ be a regular value of the function $p$ such that $M_{p_0}=\{x\in U\mid p(x)=p_0\}$ is a compact hypersurface in $U$.
Then we can construct a compactly supported smooth solution to the Euler equations on the whole of
${\mathbb R}^n$ by choosing $\varphi(p)$ as a cutoff function with support in a small neighborhood of $p_0$.
Indeed, the new velocity vector field $\tilde u$ and the gradient ${{\rm grad}}\,\tilde p$ are supported in some compact neighborhood $\tilde U\subset U$ of the surface $M_{p_0}$,
as {is} seen from \eqref{1.4}, and we define $\tilde u$ as zero in ${\mathbb R}^n\setminus U$.
Thus, the new pressure $\tilde p$ is constant on every connected component of $U\setminus \tilde U$.
Since only the gradient ${\rm grad}\,p$ participates in \eqref{1.1}--\eqref{1.3}, we can  assume without lost of generality that $\tilde p=0$ on the ``exterior component" of $U\setminus \tilde U$. It is now clear that $\tilde p$ can be extended to a compactly supported function $\tilde p\in C^\infty({\mathbb R}^n)$.

For some neighborhood $O({\mathcal C})$ of the circle ${\mathcal C}=\{(x_1,x_2,0)\in{\mathbb R}^3\mid x_1^2+x_2^2=R^2\}$, Gavrilov \cite{G} proved the existence of a solution $u\in C^\infty\big(O({\mathcal C})\setminus{\mathcal C};{\mathbb R}^3\big),\ p\in C^\infty\big(O({\mathcal C})\setminus{\mathcal C}\big)$ of the Euler equations satisfying \eqref{1.3}, and such that for some regular value $p_0$ of the function $p$, the surface $M_{p_0}\subset O({\mathcal C})\setminus{\mathcal C}$ is diffeomorphic to the torus.
Gavrilov's formulas involve an arbitrary positive constant $R$, without lost of generality we set $R=1$.
Using the localization procedure described above, Gavrilov proved the existence of a solution
$\tilde u\in C^\infty({\mathbb R}^3;{\mathbb R}^3),\ \tilde p\in C^\infty({\mathbb R}^3)$ of the Euler equations supported in a small neighborhood of $M_{p_0}$.
Thus, Gavrilov gave a positive answer to the {long standing} question:
Is there a smooth compactly supported solution to the Euler equations on ${\mathbb R}^3$ that is not identically equal to zero?
Unfortunately, \cite{G} is written in terse language and many
{technical}
details are omitted.
Actually the same idea is implemented in {the} subsequent article \cite{CLV} by Constantin -- La -- Vicol.
The latter paper starts with the so called Grad -- Shafranov ansatz that has appeared in plasma physics.
Unlike \cite{G}, the article \cite{CLV} involves a thorough analysis of nonlinear ODEs that arise while constructing a solution.

We emphasize that in both papers \cite{CLV,G} the existence of a Gavrilov {\it axisymmetric} smooth compactly supported flow on ${\mathbb R}^3$ is proved {only}.
Indeed, \cite{G} starts with the Euler equations in cylindrical coordinates for axisymmetric solutions.
To authors' knowledge,
The Grad -- Shafranov ansatz is adapted to the study of axisymmetric solutions.

Existence (or non existence) of compactly supported steady flows of another kind (i.e., not GFs) is discussed in \cite{CC-new,JX-new,N-new, NV}.

We~say that two GFs $(u,p)$ and $(\tilde u,\tilde p)$, defined on the same open set $U\subset{\mathbb R}^n$,
are {\it equivalent} if \eqref{1.4} holds with a smooth non-vanishing function $\varphi(p)$.
For example, $(u,p)$ and $(-u,p)$ are equivalent GFs. In the present article, GFs are considered up to the equivalence.
We mostly study the structure of such a flow in a neighborhood of the hypersurface $M_{p_0}\subset U$ for a regular value $p_0$ of the pressure.

\vskip2mm

The article is organized as follows. In Section~\ref{sec:geom}, {we present a complete description of a GF in terms of first and second quadratic forms of isobaric hypersurfaces $M_p$. After the description is obtained, the Euler equations can be forgotten.

A.V. Gavrilov was so kind as told his results to the second author and some other colleagues before the article \cite{G} was published. While discussing Gavrilov's results, Ya.V.~Ba\-zaykin suggested an explicit example of a GF in any even dimension (private communication, 2018). The example was independently reproduced in the recent paper by A.~Enciso, D.~Peralta-Salas, and F.~Torres de Lizaur \cite[Proposition 2.1]{EP-STdeL-new}. The example is discussed at the end of Section~\ref{sec:geom}.

The following observation is widely used in \cite{NV}. If a solution $(u,p)$ to the Euler equations is defined on the whole of ${\mathbb R}^3$ and sufficiently fast decays at infinity, then the quadratic form $(\nu\cdot u)(\xi\cdot u)$ integrates to zero over every 2D affine plane $P\subset{\mathbb R}^3$, where $\nu$ is the normal vector to $P$ and $\xi$ is an arbitrary vector parallel to $P$. For a GF, the two-dimensional integral over $P$ can be reduced to a one-dimensional integral over the curve $P\cap M_p$. The reduction is presented in Section 3. This property of GFs is interesting by itself, but so far we do not know any application of the property. Therefore Section 3 can be skipped on first reading.

In Section 4, we obtain a system of PDEs for GFs in three dimensions. It is an overdetermined system: 4 equations in 3 unknown functions. The problem of deriving consistency conditions for the system is the main problem in the study of GFs. The problem remains open in the general case.}
%
%

Sections~\ref{sec:axi}--\ref{sec:structure} are devoted to axisymmetric GFs in {the 3D case}. Unlike \cite{CLV,G}, our analy\-sis of
{such} flows is based on the well-known geometric fact: the equation for geodesics admits a first integral for surfaces of revolution,
the Clairaut integral.
In~Section~\ref{sec:axi},
we reduce the system {of} Section~\ref{sec:eqs} to a simpler system of PDEs for axisymmetric GFs: two equations in one unknown function of two variables, including also two functions of one variable $p$.
{Consistency conditions for the latter system} are derived in Section~\ref{sec:cond}.

In Section 7, we discuss a numerical method of constructing axisymmetric GFs. In~the general case, our method gives only a local GF in a neighborhood of a given point. Nevertheless, global  axisymmetric GFs can be found due to periodicity in the $z$-direction.

The existence of an axisymmetric GF in the open set $O({\mathcal C})\setminus{\mathcal C}$ is proved in \cite{G} such that the pressure function $p(r,z)$ is smooth in a neighborhood of the point $(r,z)=(1,0)$ that is a nondegenerate minimum point of $p$.
We~study such flows in Section~\ref{sec:structure}. The~corresponding system of PDEs can be solved in series.

We emphasize that \cite{CLV,G} only prove the existence of axisymmetric GFs but do not give numerical examples.
In our opinion, numerical and geometric examples are of a great importance since they can lead to new hypotheses.
In Sections~\ref{sec:2ex}--\ref{sec:structure}, we {present} two geometric illustrations {for} better understanding axisymmetric
GF{s}: one with isobaric surfaces diffeomorphic to a torus and {second one periodic in the $z$-direction}.

Some open questions on GFs are posed  in Section~\ref{sec:problems}.

\section{Geometry of {a Gavrilov flow}}
\label{sec:geom}

Let $(u,p)$ be a GF on an open set $U\subset{\mathbb R}^n$.
Integral curves of the vector field $u$ are also called \textit{particle trajectories}.
The following {statement immediately follows from \eqref{1.3}}.

\begin{proposition}\label{P2.1}
The pressure $p$ is constant on every integral curve of the vector field~$u$.
\end{proposition}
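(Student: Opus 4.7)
The plan is to unwind the definition of an integral curve and apply equation \eqref{1.3} directly; there is essentially no obstacle here, since the statement is flagged in the text as an immediate consequence of the orthogonality condition.

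Concretely, I would let $\gamma\colon I\to U$ be an integral curve of $u$, so that by definition $\dot\gamma(t)=u(\gamma(t))$ for all $t$ in the open interval $I$. I would then compute the derivative of $p\circ\gamma$ using the chain rule:
\begin{equation*}
\frac{d}{dt}\,p(\gamma(t))=\mathrm{grad}\,p(\gamma(t))\cdot\dot\gamma(t)=\mathrm{grad}\,p(\gamma(t))\cdot u(\gamma(t)).
\end{equation*}
By hypothesis \eqref{1.3}, the right-hand side vanishes identically on $U$, so it vanishes in particular along $\gamma$. Hence $(p\circ\gamma)'(t)=0$ for all $t\in I$, and since $I$ is an interval, $p\circ\gamma$ is constant.

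The only thing worth remarking is that smoothness of $u$ guarantees that integral curves exist and are $C^\infty$, so the chain rule applies without any regularity caveats; no use is made of \eqref{1.1} or \eqref{1.2}, only of the Gavrilov condition \eqref{1.3}. This is really a one-line argument, and I expect the author's proof to be essentially the same computation.
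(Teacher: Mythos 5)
Your argument is correct and is exactly the computation the paper has in mind: the text states that the proposition ``immediately follows from \eqref{1.3}'', and the intended justification is precisely the chain-rule identity $\frac{d}{dt}p(\gamma(t))={\rm grad}\,p(\gamma(t))\cdot u(\gamma(t))=0$ along any integral curve. Nothing further is needed.
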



By Proposition~\ref{P2.1}, the Bernoulli law
{$$
|u|^2/2+p={\rm{const}}\quad\mbox{along a particle trajectory}
$$
splits, for a GF, into two conservation laws:}
\begin{equation}
p=c={\rm{const}},\quad {\vert u\vert}=C={\rm{const}}\quad\mbox{along a particle trajectory}.
                          \label{2.1}
\end{equation}

We say that $x\in U$ is a {\it regular point} if
${\rm grad}\,p(x)\neq0$. The vector field $u$ does not vanish at regular points as is seen from \eqref{1.1}. The sets
$
 M_{p_0}=\{x\in U\mid p(x)=p_0={\rm {const}}\}
$
will be called {\it isobaric hypersurfaces}.
A particle trajectory starting at a point of an isobaric hypersurface $M_p$ does not leave $M_p$ ``for ever''.
In the general case, an arbitrary closed subset of $U$ can be an isobaric hypersurface $M_p$.
But $M_p$ is indeed a smooth hypersurface of ${\mathbb R}^n$ in a neighborhood of a regular point $x\in M_p$. We say that $M_p$ is a {\it regular isobaric hypersurface} if it consists of regular points.

Recall that a vector field $u$ on a manifold $M$ with a Riemannian metric $g$ is called a {\it geodesic vector field} if
{\begin{equation}
\nabla_{\!u}\,u=0,
                  \label{2.2}
\end{equation}
}
where $\nabla$ stands for the covariant derivative with respect to the Levi-Civita connection of $(M,g)$. Integral curves of a geodesic vector field are geodesics. In the case of a GF, regular isobaric hypersurfaces $M_p\subset{\mathbb R}^n$ are considered with the Riemannian metric induced by the Euclidean metric of ${\mathbb R}^n$.

\begin{proposition}
Given a GF $(u,p)$, the restriction of $u$ to every regular isobaric hypersurface $M_p$ is a non-vanishing geodesic vector field.
\end{proposition}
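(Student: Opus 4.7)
The plan is to show the two claims (non-vanishing and geodesic) essentially directly from the three equations \eqref{1.1}--\eqref{1.3}, using nothing beyond the Gauss formula that describes the Levi-Civita connection of an embedded hypersurface as the tangential projection of the ambient Euclidean connection.

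First I would note that the Gavrilov condition \eqref{1.3} means precisely that, at every regular point of a regular isobaric hypersurface $M_p$, the vector $u$ is tangent to $M_p$, so the restriction $u|_{M_p}$ is indeed a vector field on $M_p$. For the non-vanishing claim, I use \eqref{1.1}: at a regular point, $\mathrm{grad}\,p\neq 0$, hence $(u\cdot\nabla)u=-\mathrm{grad}\,p\neq 0$, which forces $u\neq 0$ at that point.

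The geodesic claim is then the heart of the matter. Writing $D$ for the flat connection of $\mathbb{R}^n$, one has $D_u u=(u\cdot\nabla)u$, and the Gauss formula gives
\begin{equation*}
\nabla_{\!u}\,u \;=\; D_u u - \langle D_u u,N\rangle\,N,
\end{equation*}
where $N=\mathrm{grad}\,p/|\mathrm{grad}\,p|$ is the unit normal to the level set $M_p$ and $\nabla$ is the induced Levi-Civita connection. By \eqref{1.1}, $D_u u=-\mathrm{grad}\,p$ is parallel to $N$, so its tangential projection vanishes and $\nabla_{\!u}\,u=0$, which is precisely \eqref{2.2}.

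Honestly, I do not see any real obstacle: the proof is a one-line observation once one recognises that $\mathrm{grad}\,p$ is normal to the isobaric hypersurface and that the induced connection is the tangential projection of the Euclidean one. The only thing to be slightly careful about is that the argument is carried out pointwise at regular points, where $M_p$ is genuinely a smooth hypersurface and $N$ is well defined; the Gavrilov hypothesis \eqref{1.3} is exactly what guarantees that $u$ lives in the tangent bundle to which the Gauss formula applies.
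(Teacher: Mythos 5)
Your proof is correct and is essentially identical to the paper's: the paper also writes $\nabla_{\!u}u=P(u\cdot\nabla u)=-P(\mathrm{grad}\,p)=0$ with $P$ the tangential projection (your Gauss-formula display is just this projection written out), and derives non-vanishing from \eqref{1.1} at regular points in the same way. No differences worth noting.
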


\begin{proof}
It consists of one line
\begin{equation}
\nabla_{\!u}\,u=P(u\cdot\nabla u)=-P({\rm grad}\,p)=0
                          \label{2.2a}
\end{equation}
{with the following comment. $P$ is the orthogonal projection onto the tangent hyperplane of $M_p$. On the left-hand side of \eqref{2.2a}, $\nabla$ stands for the covariant derivative
with respect to the Levi-Civita connection on $M_p$.
But the second $\nabla$ stands for the Euclidean gradient, the same operator as in the Euler equations \eqref{1.1}--\eqref{1.2}.
The first equality in \eqref{2.2a} is the main relationship between intrinsic geometry of a hypersurface and geometry of the ambient space; it goes back to Gauss and is valid in a more general setting \cite[Chapter~VII, Proposition~3.1]{KN}. The second equality in \eqref{2.2a} holds by \eqref{1.1}, and the last equality holds by~\eqref{1.3}.
}
\end{proof}

Since the vector field $u$ does not vanish at regular points, no integral curve of $u$ living on a regular hypersurface $M_p$
degenerates to a point.
Thus, integral curves of $u$ constitute a geodesic foliation
of a regular part of any isobaric hypersurface.


\begin{proposition}\label{P2.2}
{Given a Gavrilov flow $(u,p)$ on an open set $U\subset{\mathbb R}^n$, l}et us restrict the vector field $u$ onto a regular isobaric hypersurface $M_p$, and let ${\rm{div}}\,u$ be the $(n-1)$-dimensional divergence of the restriction which is understood in the sense of intrinsic geometry of $M_p$. Then
\begin{equation}
 {\rm {div}}\,u = u(\log \vert {\rm grad}\,p\,\vert).
                          \label{2.3}
\end{equation}
On the right-hand side of \eqref{2.3}, the vector field $u$ is considered as a differentiation of the algebra $C^\infty(M_p)$
of smooth functions on $M_p$.
\end{proposition}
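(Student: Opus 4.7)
The plan is to compare the ambient Euclidean divergence of $u$ (which vanishes by \eqref{1.2}) with its intrinsic divergence on $M_p$. Let $\nu={\rm grad}\,p/|{\rm grad}\,p|$ denote the unit normal field to the foliation by isobaric hypersurfaces, defined on a neighborhood of a regular point. Choose a local orthonormal frame $e_1,\dots,e_{n-1},e_n=\nu$ with $e_1,\dots,e_{n-1}$ tangent to $M_p$. Since $u$ is tangent to $M_p$ (by \eqref{1.3}), the Gauss formula gives $\nabla_{\!e_i}u=\nabla^{M_p}_{\!e_i}u+\mathrm{II}(e_i,u)\,\nu$ for $i\le n-1$, so
\[
0=\nabla\cdot u=\sum_{i=1}^{n-1}\langle\nabla_{\!e_i}u,e_i\rangle+\langle\nabla_{\!\nu}u,\nu\rangle={\rm div}\,u+\langle\nabla_{\!\nu}u,\nu\rangle,
\]
where the ambient Levi-Civita connection is just the flat Euclidean connection. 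Hence the claim \eqref{2.3} is equivalent to the identity $\langle\nabla_{\!\nu}u,\nu\rangle=-u(\log|{\rm grad}\,p|)$.

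To establish this identity I would differentiate \eqref{1.3} in the normal direction. Since $u(p)=\langle u,{\rm grad}\,p\rangle$ vanishes on a whole neighborhood of $M_p$ (the flow is defined on $U$ and condition \eqref{1.3} holds throughout $U$), we can apply the operator $\nu(\cdot)$ and use the metric compatibility of $\nabla$:
\[
0=\nu\langle u,{\rm grad}\,p\rangle=\langle\nabla_{\!\nu}u,{\rm grad}\,p\rangle+\langle u,\nabla_{\!\nu}{\rm grad}\,p\rangle.
\]
For the first term, $\langle\nabla_{\!\nu}u,{\rm grad}\,p\rangle=|{\rm grad}\,p|\,\langle\nabla_{\!\nu}u,\nu\rangle$. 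For the second term, I would use symmetry of the Hessian, $\langle u,\nabla_{\!\nu}{\rm grad}\,p\rangle=\langle\nabla_{\!u}{\rm grad}\,p,\nu\rangle$, and then rewrite $\nabla_{\!u}{\rm grad}\,p=u(|{\rm grad}\,p|)\,\nu+|{\rm grad}\,p|\,\nabla_{\!u}\nu$; taking the inner product with $\nu$ leaves only $u(|{\rm grad}\,p|)$, because $\langle\nabla_{\!u}\nu,\nu\rangle=0$ from $|\nu|=1$. Combining gives $|{\rm grad}\,p|\,\langle\nabla_{\!\nu}u,\nu\rangle=-u(|{\rm grad}\,p|)$, which is exactly the identity needed.

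There is no serious obstacle; the only points requiring care are the verification that $u\cdot{\rm grad}\,p=0$ holds not only on $M_p$ but on the ambient open set (so that normal differentiation is legitimate), and the cancellation of the normal component of $\nabla_{\!u}\nu$, which is immediate from $|\nu|^2=1$. Note that the Euler equation \eqref{1.1} itself is not used in this argument — only \eqref{1.2} and \eqref{1.3} enter — which is consistent with the fact that \eqref{2.3} is a purely kinematic statement about a divergence-free tangent field on a level set of $p$.
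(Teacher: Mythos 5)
Your proof is correct, but it follows a genuinely different route from the paper's. The paper constructs adapted curvilinear coordinates $(z^1,\dots,z^n)$ with $z^n=p$ by flowing along $\xi={\rm grad}\,p/\vert{\rm grad}\,p\,\vert^2$, writes the Euclidean metric in block form with $h_{nn}=\vert{\rm grad}\,p\,\vert^{-2}$, and extracts \eqref{2.3} from the difference of traced Christoffel symbols $G^i_{i\beta}-\Gamma^\alpha_{\alpha\beta}=-\partial_\beta\log\vert{\rm grad}\,p\,\vert$. You instead work with an orthonormal frame and identify the same correction term as $\langle\nabla_{\!\nu}u,\nu\rangle$, which you then evaluate by differentiating the constraint $\langle u,{\rm grad}\,p\rangle=0$ in the normal direction and invoking the symmetry of the Euclidean Hessian of $p$ together with $\langle\nabla_{\!u}\nu,\nu\rangle=0$. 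Both arguments rest only on \eqref{1.2} and \eqref{1.3}, so the shared strategy is ``ambient divergence $=$ intrinsic divergence $+$ normal correction''; what differs is how the correction is computed. Your version is coordinate-free and shorter, and it isolates cleanly where the hypothesis $u\cdot{\rm grad}\,p=0$ on a full neighborhood (not merely on $M_p$) is used. The paper's version is heavier but produces as a by-product the full table \eqref{2.7} of Christoffel symbols in the adapted coordinates, which is then reused verbatim in the proof of \eqref{2.10}; that is the main thing the coordinate computation buys.
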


\begin{proof}
We will show that the equation \eqref{2.3} is equivalent to the incompressibility equation \eqref{1.2}. To this end we will rewrite the equation \eqref{1.2} in local curvilinear coordinates adapted to the foliation of $U$ into isobaric hypersurfaces.
Fix a regular point $x_0\in U$ and set $p_0=p(x_0)$. For $p\in{\mathbb R}$ sufficiently close to $p_0$, the isobar $M_p$ is a regular hypersurface near $x_0$. Choose local curvilinear coordinates $(z^1,\dots,z^{n-1})$ on the hypersurface $M_{p_0}$. Let
$$
{\mathbb R}^{n-1}\supset\Omega\ {\stackrel r\longrightarrow}\ {\mathbb R}^n,\quad r=r(z^1\dots,z^{n-1})
$$
be the parametrization of
$M_{p_0}$ in these coordinates. Assume that $0\in\Omega$ and $r(0)=x_0$.
In some neighborhood of $x_0$, we introduce local curvilinear coordinates $(z^1,\dots,z^n)$ in ${\mathbb R}^n$ as follows.
Define the vector field
\begin{equation}
\xi=\frac{{\rm grad}\,p}{\vert{\rm grad}\,p\,\vert^2}.
                          \label{2.4}
\end{equation}
For $(z^1,\dots,z^{n-1})\in{\mathbb R}^{n-1}$ sufficiently close to $0$, let
$$
R(z^1,\dots,z^{n-1};z^n),\quad(p_0-\varepsilon< z^n<p_0+\varepsilon)
$$
be the integral curve of the vector field $\xi$ starting at the point $r(z^1,\dots,z^{n-1})$ at the initial moment $z^n=p_0$. Thus $R(z)\in{\mathbb R}^n$ is the solution to the Cauchy problem
\begin{equation}
\frac{\partial R}{\partial z^n}(z)=\xi\big(R(z)\big),\quad R(z^1,\dots,z^{n-1};p_0)=r(z^1,\dots,z^{n-1}).
                          \label{2.5}
\end{equation}
Obviously, $R$ is a diffeomorphism between some neighborhood of the point $(0,\dots,0,p_0)$ and a neighborhood of $x_0$;
therefore $(z^1,\dots,z^n)$ constitute a local coordinate system in ${\mathbb R}^n$ near the point $x_0$.
By our construction, $R$ satisfies the identity
 $p\big(R(z)\big)=z^n$,
which means that the coordinate $z^n$
coincides with the pressure $p$. Nevertheless, we use the different notation for the coordinate since $z^n$ is considered as an independent variable while $p$ is a function on $U$.
For every $z^n$ sufficiently close to $p_0$, $(z^1,\dots,z^{n-1})$ are local coordinates on the isobaric hypersurface
$
M_{z^n}=\{x\in U\mid p(x)=z^n\}.
$
Let
$$
ds_{z^n}^2=g_{\alpha\beta}dz^\alpha dz^\beta,\quad g_{\alpha\beta}=\frac{\partial R}{\partial z^\alpha}\cdot\frac{\partial R}{\partial z^\alpha}
$$
be the first quadratic form of
$M_{z^n}$.
We use the following convention: Greek indices vary from 1 to $n-1$ and the summation from 1 to $n-1$ is assumed over a repeating Greek index; while Roman indices vary from 1 to $n$ with the corresponding summation rule.
We also write the Euclidean metric of ${\mathbb R}^n$ in coordinates $(z^1,\dots,z^n)$~as
$ds^2=h_{ij}dz^idz^j${, where} $h_{ij}=\frac{\partial R}{\partial z^i}\cdot\frac{\partial R}{\partial z^j}$.
Obviously,
$
h_{\alpha\beta}=g_{\alpha\beta},
$
and
$$
h_{\alpha n}(z)=\frac{\partial R}{\partial z^\alpha}(z)\cdot\frac{\partial R}{\partial z^n}(z)
=\frac{\partial R}{\partial z^\alpha}(z)\cdot\frac{{\rm grad}\,p}{\vert{\rm grad}\,p\,\vert^2}\big(R(z)\big)=0.
$$
The last equality holds since the vector $\frac{\partial R}{\partial z^\alpha}(z)$ is tangent to the hypersurface $M_{z^n}$ at the point $R(z)$ while the vector ${\rm grad}\,p(R(z))$ is orthogonal to $M_{z^n}$ at the same point.
Similarly, we get
$
h_{nn}(z)=\big\vert\frac{\partial R}{\partial z^n}(z)\big\vert^2 =\big\vert{\rm grad}\,p\big(R(z)\big)\big\vert^{-2}.
$
Thus,
$$
(h_{ij})
=\Big(\begin{array}{cc}g_{\alpha\beta}&0\\0&\vert{\rm grad}\,p\,\vert^{-2}\end{array}\Big),\quad
(h^{ij})=(h_{ij})^{-1}
=\Big(\begin{array}{cc}g^{\alpha\beta}&0\\0&\vert{\rm grad}\,p\,\vert^2\end{array}\Big).
$$
Let
$
\Gamma^\alpha_{\beta\gamma}=\frac{1}{2}g^{\alpha\delta}\big(\frac{\partial g_{\beta\delta}}{\partial z^\gamma}
+\frac{\partial g_{\gamma\delta}}{\partial z^\beta}-\frac{\partial g_{\beta\gamma}}{\partial z^\delta}\big)
$
be the Christoffel symbols of
$M_{z^n}$ in coordinates $(z^1,\dots,z^{n-1})$ and $G^i_{jk}$ be the Christoffel symbols of the Euclidean metric in coordinates $(z^1,\dots,z^n)$.
As follows from the above relations, the Christoffel symbols satisfy
\begin{equation}
\begin{aligned}
G^\alpha_{nn}&=-\frac{1}{2}g^{\alpha\delta}\,\frac{\partial\vert{\rm grad}\,p\,\vert^{-2}}{\partial z^\delta},\
G^n_{\beta n}=\frac{1}{2}\vert{\rm grad}\,p\,\vert^2\,\frac{\partial\vert{\rm grad}\,p\,\vert^{-2}}{\partial z^\beta},\
G^n_{nn}=\frac{1}{2}\vert{\rm grad}\,p\,\vert^2\,\frac{\partial\vert{\rm grad}\,p\,\vert^{-2}}{\partial z^n},\\
G^\alpha_{\beta\gamma}&=\Gamma^\alpha_{\beta\gamma},\quad
G^n_{\beta\gamma}=-\frac{1}{2}\vert{\rm grad}\,p\,\vert^2\,\frac{\partial g_{\beta\gamma}}{\partial z^n},\quad
G^\alpha_{\beta n}=\frac{1}{2}g^{\alpha\delta}\,\frac{\partial g_{\beta\delta}}{\partial z^n}.
\end{aligned}
                          \label{2.7}
\end{equation}
The velocity $u$ can be represented in the chosen coordinates as
 $u(z)=u^\alpha(z)\frac{\partial R(z)}{\partial z^\alpha}$,
since it is tangent to $M_{z^n}$. The velocity $u$ can be thought as a smooth vector field either on the $n$-dimensional open set $U$ or
on each isobaric hypersurface $M_{z^n}$ smoothly depending on the parameter $z^n$. We remain the notation $\nabla\cdot u$ for the $n$-dimensional divergence of $u$ while the $(n-1)$-dimensional divergence of $u$ on $M_{z^n}$ will be denoted by ${\rm {div}}\,u$.
Thus,
\begin{equation}
{\rm {div}}\,u=\nabla_{\alpha}u^{\alpha}=\frac{\partial u^\alpha}{\partial z^\alpha}+\Gamma^\alpha_{\alpha\beta}u^\beta.
                          \label{2.9}
\end{equation}
By the same formula,
$
\nabla\cdot u=\frac{\partial u^i}{\partial z^i}+G^i_{ij}u^j,
$
since $u^n=0$, this becomes
$$
\nabla\cdot u=\frac{\partial u^\alpha}{\partial z^\alpha}+G^i_{i\beta}u^\beta.
$$
In particular, the incompressibility equation \eqref{1.2} is written in the chosen coordinates as
$
\frac{\partial u^\alpha}{\partial z^\alpha}+G^i_{i\beta}u^\beta=0.
$
Using this equation, \eqref{2.9} becomes
$
{\rm {div}}\,u=-(G^i_{i\beta}-\Gamma^\alpha_{\alpha\beta})u^\beta
$.
By the formulas for Christoffel symbols,
$
G^i_{i\beta}-\Gamma^\alpha_{\alpha\beta}=-\frac{\partial(\log\vert{\rm grad}\,p\,\vert)}{\partial z^\beta}.
$
Using this expression in the previous formula {for} ${\rm {div}}\,u$, we get
$
{\rm {div}}\,u=\frac{\partial(\log\vert{\rm grad}\,p\,\vert)}{\partial z^\beta}\,u^\beta.
$
This is equivalent to~\eqref{2.3}.
\end{proof}

{{\bf Remark.} Proposition \ref{P2.2} is not completely new, compare with \cite[Theorem~3.4.12]{AM-new}.}

Let ${\rm{II}}$ be the second quadratic form of a regular isobaric hypersurface $M_p$.
{Recall that the second quadratic form depends on the choice of the unit normal vector to a hypersurface (the second quadratic form changes its sign if the unit normal vector $N$ is replaced with $-N$). We choose the unit normal vector for a regular isobaric surface to be a positive multiple of ${\rm grad}\,p$.}

\begin{proposition}
Given a GF $(u,p)$, the restriction of the vector field $u$ onto a regular isobaric hypersurface $M_p$ satisfies
\begin{equation}
{\rm{II}}(u,u)=-\vert{\rm grad}\,p\,\vert.
                          \label{2.10}
\end{equation}
\end{proposition}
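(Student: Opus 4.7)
The plan is to apply the Gauss decomposition of the ambient Euclidean covariant derivative into its tangential and normal parts, then plug in the Euler equation. Since $u$ is tangent to the regular isobaric hypersurface $M_p$ (by \eqref{1.3}), for any tangent vector fields $X,Y$ on $M_p$ one has the Gauss formula
\begin{equation*}
\bar\nabla_X Y = \nabla_X Y + \mathrm{II}(X,Y)\,N,
\end{equation*}
where $\bar\nabla$ is the Euclidean connection, $\nabla$ is the Levi-Civita connection of the induced metric on $M_p$, and $N$ is the chosen unit normal. Taking the inner product with $N$ gives $\mathrm{II}(X,Y) = \langle \bar\nabla_X Y, N\rangle$. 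Note the first equation in \eqref{2.2a} already encapsulates the tangential part of this decomposition and was used in the previous proposition.

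Next, I would apply this identity with $X = Y = u$. The ambient derivative $\bar\nabla_u u$ is exactly $u\cdot\nabla u$ from the Euler equation \eqref{1.1}, so
\begin{equation*}
\mathrm{II}(u,u) = \langle u\cdot\nabla u,\, N\rangle = -\langle \mathrm{grad}\,p,\, N\rangle.
\end{equation*}
By the sign convention fixed just before the proposition, $N = \mathrm{grad}\,p/|\mathrm{grad}\,p|$ is a positive multiple of $\mathrm{grad}\,p$, so $\langle \mathrm{grad}\,p, N\rangle = |\mathrm{grad}\,p|$, which yields $\mathrm{II}(u,u) = -|\mathrm{grad}\,p|$.

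There is essentially no obstacle here; the whole argument is a one-line computation once the sign convention for the second fundamental form is tracked carefully. The only thing to double-check is that the convention $\mathrm{II}(X,Y) = \langle \bar\nabla_X Y, N\rangle$ (as opposed to its negative) is consistent with the author's convention that the unit normal points in the direction of $\mathrm{grad}\,p$; the minus sign on the right-hand side of \eqref{2.10} records exactly this choice, which mirrors the familiar fact that particle trajectories bend toward higher pressure is replaced here by the geometric statement that they curve away from the normal $\mathrm{grad}\,p$.
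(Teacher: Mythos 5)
Your proof is correct, and it reaches \eqref{2.10} by a more invariant route than the paper does. You apply the Gauss formula $\bar\nabla_X Y=\nabla_XY+\mathrm{II}(X,Y)N$ directly to $X=Y=u$, take the normal component, and substitute the Euler equation; the sign works out because the paper's convention for $N$ (a positive multiple of $\mathrm{grad}\,p$) together with the derived formulas $\frac{\partial^2R}{\partial z^\alpha\partial z^\beta}=\Gamma^\gamma_{\alpha\beta}\frac{\partial R}{\partial z^\gamma}+b_{\alpha\beta}N$ is exactly the convention $\mathrm{II}(X,Y)=\langle\bar\nabla_XY,N\rangle$ that you assume, so the two are consistent. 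The paper instead works in the adapted coordinates $(z^1,\dots,z^n)$ of Proposition~\ref{P2.2}, isolates the $n$-th contravariant component of \eqref{1.1} to get $(u\cdot\nabla u)^n=-\vert\mathrm{grad}\,p\vert^2$, and then identifies $G^n_{\alpha\beta}$ with $\vert\mathrm{grad}\,p\vert\, b_{\alpha\beta}$ through the computation $\frac{\partial g_{\alpha\beta}}{\partial z^n}=-2\vert\mathrm{grad}\,p\vert^{-1}b_{\alpha\beta}$. Conceptually both arguments are the normal projection of the Euler equation, but yours is a genuine one-liner once the Gauss formula is invoked, whereas the paper's coordinate computation, though longer, builds the machinery (the adapted coordinates, the metric $h_{ij}$, the Christoffel symbols \eqref{2.7}) that it reuses for Proposition~\ref{P2.2} and for the explicit PDE systems of Sections~\ref{sec:eqs}--\ref{sec:axi}; the paper itself acknowledges afterwards that the resulting equation is coordinate-independent, which is precisely what your derivation makes manifest. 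The only cosmetic issue is your closing sentence, whose grammar has come apart; the mathematics preceding it is complete.
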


\begin{proof}
{We use the same local coordinates $(z^1,\dots,z^n)$ as in the previous proof.
To prove \eqref{2.10},
we write down the equation \eqref{1.1}
as
$
(u\cdot\nabla u)^i+({\rm grad}\,p)^i=0.
$
Setting $i=\alpha$ here, we obtain nothing new; more precisely, we obtain
the same result: integral curves of $u$ are geodesics of $M_p$. Thus, we set $i=n$ in the latter formula
$$
(u\cdot\nabla u)^n+({\rm grad}\,p)^n=0.
$$
By \eqref{2.4}--\eqref{2.5},  $({\rm grad}\,p)^n=\vert{\rm grad}\,p\,\vert^2$~and
our equation becomes}
\begin{equation}
(u\cdot\nabla u)^n=-\vert{\rm grad}\,p\,\vert^2.
                          \label{2.11}
\end{equation}
By a well-known formula for covariant derivatives,
$
(u\cdot\nabla u)^n=u^i\big(\frac{\partial u^n}{\partial z^i}+G^n_{ij}u^j\big).
$
Since $u^n=0$, this becomes
$
(u{\cdot}\nabla u)^n\!=G^n_{\alpha\beta}u^\alpha u^\beta.
$
Using
$G^n_{\alpha\beta}\!=\!-\frac{1}{2}\,\vert{\rm grad}\,p\,\vert^2\,\frac{\partial g_{\alpha\beta}}{\partial z^n}$ from \eqref{2.7}, we~get
\begin{equation}
(u\cdot\nabla u)^n=-\frac{1}{2}\,\vert{\rm grad}\,p\,\vert^2\,\frac{\partial g_{\alpha\beta}}{\partial z^n}u^\alpha u^\beta.
                          \label{2.12}
\end{equation}
Differentiating the equality
$
g_{\alpha\beta}=\frac{\partial R}{\partial z^\alpha}\cdot\frac{\partial R}{\partial z^\beta}
$
with respect to $z^n$, we obtain
{$$
\frac{\partial g_{\alpha\beta}}{\partial z^n}
=\frac{\partial^2 R}{\partial z^\alpha\partial z^n}\cdot\frac{\partial R}{\partial z^\beta}
+\frac{\partial^2 R}{\partial z^\beta\partial z^n}\cdot\frac{\partial R}{\partial z^\alpha}.
$$}
This can be written as
$$
\frac{\partial g_{\alpha\beta}}{\partial z^n}
=\frac{\partial}{\partial z^\alpha}\Big(\frac{\partial R}{\partial z^n}\cdot\frac{\partial R}{\partial z^\beta}\Big)
+\frac{\partial}{\partial z^\beta}\Big(\frac{\partial R}{\partial z^n}\cdot\frac{\partial R}{\partial z^\alpha}\Big)
-2\frac{\partial^2 R}{\partial z^\alpha\partial z^\beta}\cdot\frac{\partial R}{\partial z^n}.
$$
Both expressions in parentheses are equal to zero, and we obtain
\begin{equation}
\frac{\partial g_{\alpha\beta}}{\partial z^n}
=-2\frac{\partial^2 R}{\partial z^\alpha\partial z^\beta}\cdot\frac{\partial R}{\partial z^n}.
                          \label{2.13}
\end{equation}
Let
$
N=\frac{{\rm grad}\,p}{\vert{\rm grad}\,p\,\vert}=\vert{\rm grad}\,p\,\vert\,\frac{\partial R}{\partial z^n}
$
be the unit normal vector of the hypersurface $M_{z^n}$.
{By~classic formulas of differential geometry (so called {\it derived formulas} \cite{P}),
$$
\frac{\partial^2R}{\partial z^\alpha\partial z^\beta}
=\Gamma^\gamma_{\alpha\beta}\frac{\partial R}{\partial z^\gamma}+b_{\alpha\beta}N,
$$
where $b_{\alpha\beta}$ are
coefficients of the second quadratic form}
for $M_{z^n}$ in coordinates $(z^1,\dots,z^{n-1})$. Taking the scalar product of this equality with $N$ and using the orthogonality of $N$ to $\frac{\partial R}{\partial z^\gamma}$, we obtain
$\frac{\partial^2R}{\partial z^\alpha\partial z^\beta}\cdot N=b_{\alpha\beta}$.
Since $N=\vert{\rm grad}\,p\,\vert\,\frac{\partial R}{\partial z^n}$, we get
$
\frac{\partial^2R}{\partial z^\alpha\partial z^\beta}\cdot \frac{\partial R}{\partial z^n}=\vert{\rm grad}\,p\,\vert^{-1}b_{\alpha\beta}
$.
Using this equality, formula \eqref{2.13} becomes
$
\frac{\partial g_{\alpha\beta}}{\partial z^n}=-2\vert{\rm grad}\,p\,\vert^{-1}b_{\alpha\beta}
$.
Substituting this value into \eqref{2.12}, we get
$$
(u\cdot\nabla u)^n=\vert{\rm grad}\,p\,\vert\,b_{\alpha\beta}u^\alpha u^\beta=\vert{\rm grad}\,p\,\vert\,{\rm{II}}(u,u).
$$
Inserting this expression into \eqref{2.11}, we arrive to \eqref{2.10}.
\end{proof}

{Although some special coordinates have been used in the proof, the resulting equations \eqref{2.3} and \eqref{2.10} are of an invariant nature, i.e., independent of a coordinates choice.}

All our arguments in this section are invertible, i.e., the following statement is valid.

\begin{proposition}
{The Euler -- Gavrilov system \eqref{1.1}--\eqref{1.3} is equivalent to the system \eqref{2.2}, \eqref{2.3}, \eqref{2.10}. More precisely, let a smooth vector field $u$ and smooth real function $p$ be defined on an open set $U\subset{\mathbb R}^n$. Choose a point $x_0\in U$ such that ${\rm grad}\,p(x_0)\ne 0$. Assume that integral curves of $u$ are tangent to level hypersurfaces $M_p$ and equations \eqref{2.2}, \eqref{2.3},  \eqref{2.10} hold in some neighborhood of $x_0$. Then $(u,p)$ is a Gavrilov flow in some neighborhood of $x_0$}.
\end{proposition}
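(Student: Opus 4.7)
The plan is to reuse, in reverse, the adapted curvilinear coordinates $(z^1,\dots,z^n)$ and computations already developed in the proofs of Propositions~\ref{P2.2} and the one preceding the statement. The key observation is that every implication in those earlier proofs was in fact an equivalence at the level of coordinate formulas, so no new analytic content is required; one just has to check that the three intrinsic equations $(2.2)$, $(2.3)$, $(2.10)$ reconstruct the three components of the Euler--Gavrilov system.

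First, I would fix $x_0$ with ${\rm grad}\,p(x_0)\neq 0$ and construct, exactly as before, the coordinates $(z^1,\dots,z^n)$ in which $z^n = p\circ R$, the metric is block-diagonal with $h_{\alpha\beta} = g_{\alpha\beta}$, $h_{\alpha n}=0$, $h_{nn}=|{\rm grad}\,p|^{-2}$, and the Christoffel symbols $G^i_{jk}$ of ${\mathbb R}^n$ are given by the formulas \eqref{2.7}. The tangency hypothesis means $u^n\equiv 0$, which is precisely equation \eqref{1.3} (i.e.\ $u\cdot{\rm grad}\,p=0$). Since $p=z^n$, the covector ${\rm grad}\,p$ has components $({\rm grad}\,p)^\alpha=0$ and $({\rm grad}\,p)^n=|{\rm grad}\,p|^2$ in these coordinates.

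Second, I would split the Euler equation \eqref{1.1} into its tangential and normal parts. For the tangential part, the Gauss formula identifies $P(u\cdot\nabla u)$ with the intrinsic covariant derivative $\nabla_u u$ on $M_p$, and since $({\rm grad}\,p)^\alpha=0$, the $\alpha$-components of \eqref{1.1} reduce exactly to the geodesic equation $\nabla_u u=0$, which is the hypothesis \eqref{2.2}. For the normal part, the same computation as in the previous proof gives
\[
(u\cdot\nabla u)^n = G^n_{\alpha\beta}u^\alpha u^\beta = |{\rm grad}\,p|\,{\rm II}(u,u),
\]
so the $n$-component of \eqref{1.1} reads $|{\rm grad}\,p|\,{\rm II}(u,u) + |{\rm grad}\,p|^2 = 0$, which is equivalent to hypothesis \eqref{2.10} (after dividing by $|{\rm grad}\,p|\ne 0$). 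Thus \eqref{2.2} and \eqref{2.10} together reconstruct the full vector equation \eqref{1.1}.

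Third, I would derive the incompressibility \eqref{1.2} from \eqref{2.3}. Using $u^n=0$ and the formulas \eqref{2.7}, the ambient and intrinsic divergences are related by
\[
\nabla\cdot u - {\rm div}\,u = (G^i_{i\beta}-\Gamma^\alpha_{\alpha\beta})u^\beta = -u(\log|{\rm grad}\,p|),
\]
exactly as computed in the proof of Proposition~\ref{P2.2}. Hypothesis \eqref{2.3} therefore forces $\nabla\cdot u = 0$, which is \eqref{1.2}. Combining the three steps yields \eqref{1.1}--\eqref{1.3} in a neighborhood of $x_0$.

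I do not expect any real obstacle here: the proof is a bookkeeping exercise consisting in reading the previous two proofs backwards, with the mild check that each algebraic manipulation was indeed reversible (division by the nonvanishing factor $|{\rm grad}\,p|$ is the only point worth flagging, and it is legitimate near the regular point $x_0$). The only conceptual remark worth including is the observation that the tangency assumption is what allows one to use the same coordinate construction as in the forward direction without assuming \eqref{1.1}--\eqref{1.2} a priori.
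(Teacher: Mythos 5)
Your proposal is correct and follows exactly the route the paper intends: the paper's own ``proof'' is the single remark that all arguments of the section are invertible, and your write-up simply makes that reversal explicit (tangency gives \eqref{1.3} and lets the adapted coordinates be built without the Euler equations; \eqref{2.2} and \eqref{2.10} recover the tangential and normal components of \eqref{1.1}; \eqref{2.3} recovers \eqref{1.2}). No gaps; the only point needing care, division by the nonvanishing $\vert{\rm grad}\,p\,\vert$, is the one you flag.
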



{Now, we discuss an easy example of GF which exists in any even dimension.}
%
Let $(x_1,\dots,x_{2n})$ be Cartesian coordinates in ${\mathbb R}^{2n}$. Set
\begin{equation}
u_{2j-1}(x)=-x_{2j},\ u_{2j}(x)=x_{2j-1}\ (j=1,\dots, n),\quad
p(x)=\frac{1}{2}\,\vert x\vert^2.
                          \label{3.1}
\end{equation}
It is easy to check that the equations \eqref{1.1}--\eqref{1.3} hold in ${\mathbb R}^{2n}$. Observe that $\vert u\vert^2=2p$.
{Every $x\ne0$ is a regular point}. Isobaric hypersurfaces are spheres $M_p=\{x\in{\mathbb R}^{2n}: \vert x\vert^2=2p\}$.
Integral curves of
$u$ (particle trajectories) are circles centered at the origin. Every sphere $M_p$ is foliated by particle trajectories. This foliation coincides with the {well-known}
Hopf fiber bundle ${\mathbb S}^{2n-1}\rightarrow{\mathbb C}P^{n-1}$ of an odd-dimensional sphere over the complex projective~space.

{A} GF on ${\mathbb R}^{2n+1}$ can be obtained as a direct product of the flow \eqref{3.1} with a constant velocity flow. Namely,
\begin{equation}
u_{2j-1}(x)=-x_{2j},\ u_{2j}(x)=x_{2j-1}\ (j=1,\dots, n),\quad u_{2n+1}(x)=a=\mbox{const}
                          \label{3.2}
\end{equation}
and $p(x)=\frac{1}{2}(x_1^2+\dots+x_n^2)$. Isobaric hypersurfaces are cylinders ${\mathbb S}^{2n-1}\times{\mathbb R}$, and particle trajectories are either circles ({if} $a=0$) or helices
({if} $a\ne0$).

In {order} to apply the Gavrilov localization to the flow \eqref{3.1}, choose a compactly supported smooth function
$\alpha:[0,\infty)\rightarrow{\mathbb R}$ such that $\alpha(r)=0$ for $r\le\varepsilon$ with some $\varepsilon>0$ and define the function $\beta:[0,\infty)\rightarrow{\mathbb R}$  by $\beta(r)=-\int_r^\infty s\,\alpha^2(s)\,ds$. Then
$$
\tilde u(x)=\alpha(\vert x\vert)u(x),\quad \tilde p(x)=\beta(\vert x\vert)
$$
is a smooth compactly supported GF on {the whole of} ${\mathbb R}^{2n}$ satisfying $\vert\tilde u\vert^2=\psi(\tilde p)$ with a function $\psi$ uniquely determined by $\alpha$. In particular, if $\alpha$ is supported in $(r_0-\delta,r_0+\delta)$ for some $r_0>\delta>0$, then the velocity
$\tilde u$ is supported in the spherical layer $\{x\in{\mathbb R}^{2n}: r_0-\delta<\vert x\vert<r_0+\delta\}$, and the pressure $\tilde p$ is supported in the ball
$\{x\in{\mathbb R}^{2n}: \vert x\vert<r_0+\delta\}$ with $\tilde p=\mbox{const}$ in the smaller ball $\{x\in{\mathbb R}^{2n}: \vert x\vert\le r_0-\delta\}$.
Then we can take a linear combination of several such localized flows with disjoints supports.
In particular, a periodic GF can be constructed in this way.

\section{Plane sections of a Gavrilov flow}

Let a {sufficiently} smooth solution $(u,p)$ of the Euler equations \eqref{1.1}--\eqref{1.2} be defined on {the whole of} ${\mathbb R}^n$. Assume the solution to decay sufficiently fast at infinity together with first order derivatives (e.g., it can be a smooth compactly supported solution).
Then {\cite{NV}} the equality
\begin{equation}
\int\nolimits_P\big(\xi\cdot u(x)\big)\big(\nu\cdot u(x)\big)\,dx=0
                                   \label{4.1}
\end{equation}
holds for every affine hyperplane $P\subset{\mathbb R}^n$ and every vector $\xi\in{\mathbb R}^n$ parallel to $P$, where $\nu$ is the normal vector to the hyperplane $P$ and $dx$ stands for the $(n-1)$-dimensional Lebesgue measure on $P$. Actually there are $n-1$ independent equations in \eqref{4.1} since the vector $\xi$ can take $n-1$ linearly independent values from the space $\nu^\bot=\{\xi\in{\mathbb R}^n\mid \nu\cdot\xi=0\}$.

For a GF, the equation \eqref{4.1}, combined with the Gavrilov localization, yields an interesting statement.
The following theorem can be easily generalized to an arbitrary dimension.

\begin{theorem}
Let $(u,p)$ be a smooth GF defined on {the whole of} ${\mathbb R}^3$ and sufficiently fast decaying at infinity together with first order derivatives.
Let $M_{p_0}$ be a regular isobaric surface and let an affine plane $P_0$ transversally intersect $M_{p_0}$. Then, for any $p$ sufficiently close to $p_0$ and for any affine plane $P$ sufficiently close to $P_0$, 
\begin{equation}
\int\limits_{M_p\cap P}\frac{1}{\vert{\rm grad}\,q(x)\vert}\big(\xi\cdot u(x)\big)\big(\nu\cdot u(x)\big)\,ds=0,
                                   \label{4.2}
\end{equation}
where $q\in C^\infty(P)$ is the restriction of the function $p$ to the plane $P$, $\nu$ is the unit normal vector to $P$, and $\xi$ is an arbitrary vector parallel to $P$. The integration in \eqref{4.2} is performed with respect to the arc length $ds$ of the curve $M_p\cap P$.
\end{theorem}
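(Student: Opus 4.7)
The plan is to combine the global integral identity \eqref{4.1} with the Gavrilov localization \eqref{1.4} to produce a one-parameter family of compactly supported GFs, and then to reduce the resulting two-dimensional integral over $P$ to a one-dimensional integral over the curves $M_p\cap P$ by means of the coarea formula.

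First I would fix a nonnegative cutoff $\varphi\in C^\infty_c(\mathbb{R})$ supported in a tiny interval around a chosen value $p$ close to $p_0$. By \eqref{1.4}, the pair
$$
\tilde u=\varphi(p)\,u,\qquad {\rm grad}\,\tilde p=\varphi^2(p)\,{\rm grad}\,p
$$
is again a GF on ${\mathbb R}^3$, and, after extending $\tilde u$ by zero and $\tilde p$ by a constant as in the localization construction recalled after \eqref{1.4}, it inherits from $(u,p)$ the smoothness and decay properties required by \eqref{4.1}. Plugging $\tilde u$ into \eqref{4.1} therefore yields
$$
\int_P\varphi^2\big(p(x)\big)\,\big(\xi\cdot u(x)\big)\big(\nu\cdot u(x)\big)\,dx=0.
$$

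Next, transversality of $P_0$ and $M_{p_0}$ ensures that for all $(p,P)$ sufficiently close to $(p_0,P_0)$, the restriction $q=p|_P$ is a smooth function with nonvanishing gradient on a neighborhood of the compact curve $M_{p_0}\cap P_0$, so $M_t\cap P$ is a smooth curve depending smoothly on $t$. Applying the coarea formula on $P$ with respect to $q$ rewrites the previous identity as
$$
\int_{\mathbb R}\varphi^2(t)\,I(t)\,dt=0,\qquad I(t)=\int_{M_t\cap P}\frac{(\xi\cdot u)(\nu\cdot u)}{|{\rm grad}\,q|}\,ds.
$$
Since $\varphi^2$ may be taken to be an arbitrary nonnegative smooth bump centered at any $t_\ast$ close to $p_0$, and $I$ is continuous in $t$ (by smooth dependence of $M_t\cap P$ on $t$ and the uniform bound $|{\rm grad}\,q|\ge c>0$ near $M_{p_0}\cap P_0$), a standard approximate-identity argument forces $I(t_\ast)=0$. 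This is exactly \eqref{4.2}, and the argument goes through verbatim as $P$ ranges over a small neighborhood of $P_0$.

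The main technical point to check is the transition from the global hypothesis to the localized flow: one must verify that the cut-off pair $(\tilde u,\tilde p)$ really satisfies the regularity and decay conditions needed to invoke \eqref{4.1}, which is handled by the Gavrilov localization discussion in the introduction. Apart from that, the only nontrivial input is the uniform lower bound on $|{\rm grad}\,q|$ ensured by transversality, which keeps the coarea integrand smooth and the function $I(t)$ continuous, making the final approximate-identity step legitimate.
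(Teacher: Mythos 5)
Your proof is correct and follows essentially the same route as the paper's: Gavrilov localization \eqref{1.4} with a bump concentrated near the level $p$, the global identity \eqref{4.1}, and then disintegration of the plane integral along the level curves of $q=p|_P$. The only difference is presentational: the paper builds explicit coordinates $(s,\tau)$ adapted to the level sets and expands the area form by hand before letting the bump width tend to zero, whereas you invoke the coarea formula and an approximate-identity argument, which packages the same computation.
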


\begin{proof}
Since $M_{p_0}$ and $P_0$ intersect transversally, the same is true for $M_p$ and $P$ for any $p$ close to $p_0$ and for any plane $P$ close to $P_0$. We fix such $p$ and $P$, set $q=p\,\vert_P$ and $\gamma=M_p\cap P$. Observe that $\gamma$ is a regular curve on the plane $P$ and the gradient ${\rm grad}\,q$ does not vanish in some neighborhood of $\gamma$. Therefore the integral {on the left-hand side of}
\eqref{4.2} is well defined.

We parameterize the curve $\gamma$ by the arc length, $\gamma=\gamma(s)$. Then we choose local coordinates $(s,\tau)$ in a neighborhood $U\subset P$ of $\gamma$ in the same way as in the proof of {Proposition~\ref{P2.2}}. Namely, the coordinates are chosen so that $x(s,0)=\gamma(s)$ and $q(x(s,\tau))=p+\tau$. For every $s_0$, the coordinate line $\delta(\tau)=x(s_0,\tau)$ starts at $\gamma(s_0)$ orthogonally to $\gamma$ with the initial speed $\vert\dot\delta(0)\vert=\frac{1}{\vert\nabla q(\gamma(s_0))\vert}$.
Therefore the area form $dx$ of the plane $P$ is written in the chosen coordinates as
$dx=\big(\frac{1}{\vert{\rm grad}\,q(\gamma(s))\vert}+o(\tau)\big)ds\,d\tau$.

Fix a smooth function $\mu:{\mathbb R}\rightarrow{\mathbb R}$ such that $\mu(r)=0$ for $\vert r\vert\ge1$, $\mu(r)>0$ for $\vert r\vert<1$, and $\int_{-1}^1\mu(r)\,dr=1$. For $\varepsilon>0$, set $\alpha_\varepsilon(r)=\sqrt{\mu((r-c)/\varepsilon)}$. Using the latter function, we define the localized GF $(\tilde u,\tilde p)$ by
$\tilde u=\alpha_\varepsilon(p)\,u,\ {\rm grad}\,\tilde p=\alpha_\varepsilon^2(p)\,{\rm grad}\,p$.
Applying \eqref{4.1} to $(\tilde u,\tilde p)$, we obtain
$\int_P\alpha_\varepsilon(q(x))\big(\xi\cdot u(x)\big)\big(\nu\cdot u(x)\big)\,dx=0$.
In the chosen coordinates, this is written as
\begin{equation}
\int\limits_\gamma\int\limits_{-\varepsilon}^\varepsilon \mu(\tau/\varepsilon)\Big(\frac{1}{\vert{\rm grad}\,q(\gamma(s))\vert}+o(\tau)\Big)
\big(\xi\cdot u(x(s,\tau))\big)\big(m\cdot u(x(s,\tau))\big)\,d\tau\,ds=0.
                                   \label{4.3}
\end{equation}
The integrand can be represented as
\[
\begin{aligned}
& \mu(\tau/\varepsilon)\Big(\frac{1}{\vert{\rm grad}\,q(\gamma(s))\vert}+o(\tau)\Big)
\big(\xi\cdot u(x(s,\tau))\big)\big(\nu\cdot u(x(s,\tau))\big)\\
&\qquad =\mu(\tau/\varepsilon)\frac{1}{\vert{\rm grad}\,q(\gamma(s))\vert}
\big(\xi\cdot u(\gamma(s))\big)\big(\nu\cdot u(\gamma(s))\big)+o(1).
\end{aligned}
\]
The variables $s$ and $\tau$ are separated up to $o(\tau)$ on the right-hand side of the latter formula.
Using this representation and $\int_{-\varepsilon}^\varepsilon\mu(\tau/\varepsilon)\,d\tau=\varepsilon$, we obtain from \eqref{4.3}
$\varepsilon\int\limits_\gamma\frac{1}{\vert{\rm grad}\,q(\gamma(s))\vert}
\big(\xi\cdot u(\gamma(s))\big)\big(\nu\cdot u(\gamma(s))\big)\,ds+o(\varepsilon)=0.
$
In the limit as $\varepsilon\rightarrow0$, this gives~\eqref{4.2}.
\end{proof}

\section{Differential equations for {a Gavrilov flow}}
\label{sec:eqs}

{We consider the 3D case in this section. Let a GF $(u,p)$ be defined on an open set of ${\mathbb R}^3$. Let $(x,y,z)$ be Cartesian coordinates. We assume that, for $p\in(-p_0,p_0)$, the regular isobar surface $M_p$ coincides with the graph of a smooth function
\begin{equation}
z=f(p;x,y)\quad \big((x,y)\in U\big)
                                    \label{5.1}
\end{equation}
for some open domain $U\subset{\mathbb R}^2$.}
The first and the second quadratic forms of {$M_p$} are
\begin{eqnarray}
\nonumber
&& I =(1+f'_x{}^2)dx^2+2f'_xf'_y\,dxdy+(1+f'_y{}^2)dy^2,\\
&& II=\pm\frac{1}{\sqrt{1+f'_x{}^2+f'_y{}^2}}\big(f''_{xx}\,dx^2+2f''_{xy}\,dxdy+f''_{yy}\,dy^2\big),
                                    \label{5.3}
\end{eqnarray}
where the sign depends on the choice of the unit vector normal to $M_p$.
Christoffel symbols of this metric are
\begin{equation}
\begin{aligned}
\Gamma^x_{xx}&=\frac{f'_xf''_{xx}}{1+f'_x{}^2+f'_y{}^2},\quad
\Gamma^x_{xy}=\frac{f'_xf''_{xy}}{1+f'_x{}^2+f'_y{}^2},\quad
\Gamma^x_{yy}=\frac{f'_xf''_{yy}}{1+f'_x{}^2+f'_y{}^2},\\
\Gamma^y_{xx}&=\frac{f'_yf''_{xx}}{1+f'_x{}^2+f'_y{}^2},\quad
\Gamma^y_{xy}=\frac{f'_yf''_{xy}}{1+f'_x{}^2+f'_y{}^2},\quad
\Gamma^y_{yy}=\frac{f'_yf''_{yy}}{1+f'_x{}^2+f'_y{}^2}.
\end{aligned}
                                    \label{5.4}
\end{equation}
Let $(u^x,u^y)$ be geometric coordinates of the vector field $u$, i.e.,
$u=u^x(p;x,y)\frac{\partial}{\partial x}+u^y(p;x,y)\frac{\partial}{\partial y}$,
where $\frac{\partial}{\partial x}$ and $\frac{\partial}{\partial y}$ are considered as coordinate vector fields tangent to the surface $M_p$.
Using \eqref{5.4}, we compute covariant derivatives
\begin{equation}
\begin{aligned}
\nabla_{\!x}u^x&=\frac{\partial u^x}{\partial x}+\frac{f'_x (f''_{xx}u^x+f''_{xy}u^y)}{1+f'_x{}^2\!+\!f'_y{}^2},\quad
\nabla_{\!y}u^x=\frac{\partial u^x}{\partial y}+\frac{f'_x (f''_{xy}u^x+f''_{yy}u^y)}{1+f'_x{}^2\!+\!f'_y{}^2},\\
\nabla_{\!x}u^y&=\frac{\partial u^y}{\partial y}+\frac{f'_y (f''_{xx}u^x+f''_{xy}u^y)}{1+f'_x{}^2\!+\!f'_y{}^2},\quad
\nabla_{\!y}u^y=\frac{\partial u^y}{\partial y}+\frac{f'_y (f''_{xy}u^x+f''_{yy}u^y)}{1+f'_x{}^2\!+\!f'_y{}^2}.
\end{aligned}
                                    \label{5.7}
\end{equation}
In particular,
\begin{equation}
\mbox{div}\,u=\nabla_{\!x}u^x{+}\nabla_{\!y}u^y=
\frac{\partial u^x}{\partial x}{+}\frac{\partial u^y}{\partial y}
+\frac{(f'_xf''_{xx}{+}f'_yf''_{xy})u^x{+}(f'_xf''_{xy}{+}f'_yf''_{yy})u^y}{1\!+\!f'_x{}^2\!+\!f'_y{}^2}.
                                    \label{5.8}
\end{equation}
Substituting expressions \eqref{5.7} into the formula
$
\nabla_{\!u}u=(u^x\nabla_{\!x}u^x+u^y\nabla_{\!y}u^x)\frac{\partial}{\partial x} +(u^x\nabla_{\!x}u^y+u^y\nabla_{\!y}u^y)\frac{\partial}{\partial y},
$
we get
$$
\begin{aligned}
\nabla_{\!u}u&=\Big(u^x\frac{\partial u^x}{\partial x}+u^y\frac{\partial u^x}{\partial y}
{+}\frac{f'_x}{1\!+\!f'_x{}^2\!+\!f'_y{}^2}\big(f''_{xx}(u^x)^2{+}2f''_{xy}u^xu^y{+}f''_{yy}(u^y)^2\big)\Big)
\frac{\partial}{\partial x}\\
&+\Big(u^x\frac{\partial u^y}{\partial x}+u^y\,\frac{\partial u^y}{\partial y}
{+}\frac{f'_y}{1\!+\!f'_x{}^2\!+\!f'_y{}^2}\big(f''_{xx}(u^x)^2{+}2f''_{xy}u^xu^y{+}f''_{yy}(u^y)^2\big)\Big)
\frac{\partial}{\partial y}\,.
\end{aligned}
$$
Being a geodesic vector field, $u$ {solves the equation} $\nabla_{\!u}u=0$. We {thus} arrive to the system
\begin{equation}
\begin{aligned}
u^x\,\frac{\partial u^x}{\partial x}+u^y\,\frac{\partial u^x}{\partial y}
+\frac{f'_x}{1\!+\!f'_x{}^2\!+\!f'_y{}^2}\big(f''_{xx}(u^x)^2+2f''_{xy}u^xu^y+f''_{yy}(u^y)^2\big)&=0,\\
u^x\,\frac{\partial u^y}{\partial x}+u^y\,\frac{\partial u^y}{\partial y}
+\frac{f'_y}{1\!+\!f'_x{}^2\!+\!f'_y{}^2}\big(f''_{xx}(u^x)^2+2f''_{xy}u^xu^y+f''_{yy}(u^y)^2\big)&=0.
\end{aligned}
                                    \label{5.9}
\end{equation}
Let us express $\vert{\rm grad}\,p\,\vert$ in terms of the function $f$. Let $p(x,y,z)$ be the pressure in Cartesian coordinates.
We have the identity $p(x,y,f(p;x,y))=p$. Differentiate the identity to get
$$
\begin{aligned}
p'_x(x,y,f(p;x,y))+p'_z(x,y,f(p;x,y))f'_x(p;x,y)&=0,\\
p'_y(x,y,f(p;x,y))+p'_z(x,y,f(p;x,y))f'_y(p;x,y)&=0,\\
p'_z(x,y,f(p;x,y))f'_p(p;x,y)&=1.
\end{aligned}
$$
From this we get
$p'_x=-\frac{f'_x}{f'_p},\ p'_y=-\frac{f'_y}{f'_p},\ p'_z=\frac{1}{f'_p}$.
The derivative $f'_p$ does not vanish on a regular isobaric surface \eqref{5.1}. Thus,
\begin{equation}
\vert{\rm grad}\,p\,\vert^2=p'_x{}^2+p'_y{}^2+p'_z{}^2=\frac{1+f'_x{}^2+f'_y{}^2}{f'_p{}^2}.
                                    \label{5.11}
\end{equation}
Substituting expressions \eqref{5.8} and \eqref{5.11} into \eqref{2.3}, we arrive to the equation
$$
\begin{aligned}
&\frac{\partial u^x}{\partial x}+\frac{\partial u^y}{\partial y}
+\frac{(f'_xf''_{xx}+f'_yf''_{xy})u^x+(f'_xf''_{xy}+f'_yf''_{yy})u^y}{1\!+\!f'_x{}^2\!+\!f'_y{}^2}\\
&=\frac{1}{2}\cdot\frac{f'_p{}^2}{1+f'_x{}^2+f'_y{}^2}
\Big[u^x\frac{\partial}{\partial x}\Big(\frac{1+f'_x{}^2+f'_y{}^2}{f'_p{}^2}\Big)
+u^y\frac{\partial}{\partial y}\Big(\frac{1+f'_x{}^2+f'_y{}^2}{f'_p{}^2}\Big)\Big].
\end{aligned}
$$
After the obvious simplification, it becomes
\[
\frac{\partial u^x}{\partial x}+\frac{\partial u^y}{\partial y}+\frac{1}{f'_p}(u^xf''_{px}+u^yf''_{py})=0.
\]
Substituting expressions \eqref{5.3} and \eqref{5.11} into \eqref{2.10}, we arrive to the equation
$f''_{xx}(u^x)^2+2f''_{xy}u^xu^y+f''_{yy}(u^y)^2=\frac{1+f'_x{}^2+f'_y{}^2}{\vert f'_p\vert}$.
Using this, equations \eqref{5.9} can be written as
$$
u^x\,\frac{\partial u^x}{\partial x}+u^y\,\frac{\partial u^x}{\partial y}+\frac{f'_x}{\vert f'_p\vert}=0,\quad
u^x\,\frac{\partial u^y}{\partial x}+u^y\,\frac{\partial u^y}{\partial y}
+\frac{f'_y}{\vert f'_p\vert}=0.
$$
We have proved the following

\begin{proposition} \label{P5.1}
Let a GF $(u,p)$ be defined on an open set of ${\mathbb R}^3$.
Assume that for $p\in(-p_0,p_0)$ the isobaric surface $M_p$ is regular and coincides with the graph of a smooth function
$z=f(p;x,y)\ \big((x,y)\in U\subset{\mathbb R}^2\big)$.
Write the restriction of the vector field $u$ to the surface $M_p$ in the form
$u=u^x(p;x,y)\frac{\partial}{\partial x}+u^y(p;x,y)\frac{\partial}{\partial y}$.
Then the derivative $f'_p$ does not vanish and the functions $f(p;x,y),u^x(p;x,y)$ and $u^y(p;x,y)$ satisfy the equations
\begin{eqnarray}
\frac{\partial (f'_pu^x)}{\partial x}+\frac{\partial (f'_pu^y)}{\partial y}=0,
                                    \label{5.12}\\
u^x\,\frac{\partial u^x}{\partial x}+u^y\,\frac{\partial u^x}{\partial y}+\frac{f'_x}{\vert f'_p\vert}=0,
                                    \label{5.13} \\
u^x\,\frac{\partial u^y}{\partial x}+u^y\,\frac{\partial u^y}{\partial y}+\frac{f'_y}{\vert f'_p\vert}=0,
                                    \label{5.14}\\
f''_{xx}(u^x)^2+2f''_{xy}u^xu^y+f''_{yy}(u^y)^2=\frac{1+f'_x{}^2+f'_y{}^2}{\vert f'_p\vert}.
                                    \label{5.15}
\end{eqnarray}
\end{proposition}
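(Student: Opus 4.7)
The plan is to translate the intrinsic reformulation of a Gavrilov flow established earlier in Section~\ref{sec:geom} into the extrinsic coordinates $(x,y)$ afforded by the graph representation $z=f(p;x,y)$ of the isobar $M_p$. The three ingredients I would use are: the geodesic equation $\nabla_{\!u}u=0$ on $M_p$, the divergence identity \eqref{2.3}, and the second fundamental form identity \eqref{2.10}. All four conclusions \eqref{5.12}--\eqref{5.15} should drop out by substitution and simplification, so the entire task is to organize the computation so that the algebra stays manageable.

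First I would assemble the standard geometric data of the graph $M_p$: the first fundamental form $g_{\alpha\beta}=\delta_{\alpha\beta}+f'_\alpha f'_\beta$, the inverse metric, the unit upward normal, the second fundamental form \eqref{5.3}, and the Christoffel symbols \eqref{5.4}. A direct computation then yields the covariant derivatives \eqref{5.7} and, via $\nabla_{\!u}u=0$, the preliminary (pre-simplified) geodesic system \eqref{5.9}. These steps are mechanical. Independently I would extract $|{\rm grad}\,p|$ from the graph representation by implicit differentiation of the identity $p(x,y,f(p;x,y))=p$, which gives $p'_x=-f'_x/f'_p$, $p'_y=-f'_y/f'_p$, $p'_z=1/f'_p$, and hence the key formula \eqref{5.11}.

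With these pieces in hand, \eqref{5.15} follows almost immediately by inserting \eqref{5.3} and \eqref{5.11} into \eqref{2.10}. Moreover, the left-hand side of \eqref{5.15} is exactly the quadratic form in $u^x,u^y$ that appears inside the parentheses in \eqref{5.9}; substituting \eqref{5.15} into \eqref{5.9} collapses the geodesic system to the clean form \eqref{5.13}--\eqref{5.14}. So once \eqref{5.15} is in place, the two Newton-type equations come essentially for free.

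The main obstacle, and the only step that is not purely routine, is the derivation of the conservative divergence equation \eqref{5.12}. Substituting \eqref{5.8} and \eqref{5.11} into \eqref{2.3} produces an expression in which $u\bigl(\log|{\rm grad}\,p|\bigr)$ contributes derivatives of both $f'_x,f'_y$ and $f'_p$, while ${\rm div}\,u$ carries the Christoffel-symbol terms from \eqref{5.4}. The miracle I must verify is that the $f'_x,f'_y$ pieces cancel identically against the Christoffel contributions, leaving only $\tfrac{1}{f'_p}(u^x f''_{px}+u^y f''_{py})$; this term then combines with $\partial_x u^x+\partial_y u^y$ to give the conservative form $\partial_x(f'_pu^x)+\partial_y(f'_pu^y)=0$ after multiplying through by $f'_p$. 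The cancellation is the algebraic heart of the proof and I would organize the computation to expose it as early as possible rather than expanding everything blindly.
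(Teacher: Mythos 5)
Your proposal follows essentially the same route as the paper: assemble the metric, Christoffel symbols and second fundamental form of the graph, derive $\vert{\rm grad}\,p\,\vert^2=(1+f'_x{}^2+f'_y{}^2)/f'_p{}^2$ by implicit differentiation, substitute into \eqref{2.10} to get \eqref{5.15}, use that to collapse the geodesic system \eqref{5.9} into \eqref{5.13}--\eqref{5.14}, and substitute \eqref{5.8} and \eqref{5.11} into \eqref{2.3} to obtain \eqref{5.12} after the cancellation you correctly flag as the only nontrivial algebra. The argument is correct and matches the paper's proof in both ingredients and structure.
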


It is easy to check that the system \eqref{5.12}--\eqref{5.15} has the following solution:
\begin{eqnarray}
f(x,y,p)=f(x,p)=-\sqrt{r^2(p)-x^2}\quad\big(-r(p)<x<r(p)\big),
                                    \label{5.16}\\
u^x(x,y,p)=u^x(x,p)=\frac{\sqrt{r^2(p)-x^2}}{\sqrt{r(p)\vert r'(p)\vert}},\quad u^y(x,y,p)=u^y(p)=b(p),
                                    \label{5.17}
\end{eqnarray}
where $r(p)$ is a smooth positive function with non-vanishing derivative, and $b(p)$ is an arbitrary smooth function.
For every $p$, the graph $M_p$ of the function $(x,y)\mapsto f(x,p)$ is the half of the cylinder ${\tilde M}_p=\{(x,y,z)\mid x^2+z^2=r^2(p)\}$. Observe that ${\tilde M}_p$ is a surface of revolution around the $y$-axis. Thus, \eqref{5.16}--\eqref{5.17} is an axisymmetric GF.
{We will study axisymmetric Gavrilov flows in the next section. The solution \eqref{5.16}--\eqref{5.17} can be slightly modified by a change of Cartesian coordinates in ${\mathbb R}^3$. We do not know any solution to the system \eqref{5.13}--\eqref{5.16} different of the (modified) solution \eqref{5.16}--\eqref{5.17}.}

\section{Axisymmetric {Gavrilov flows}}
\label{sec:axi}

Let $(r,z,\theta)$ be cylindrical coordinates in ${\mathbb R}^3$ related to Cartesian coordinates $(x_1,x_2,x_3)$ by
$
x_1=r\cos\theta,\ x_2=r\sin\theta,\ x_3=z.
$
We study a GF $(u,p)$ invariant under rotations around the $z$-axis. The flow is defined in an open set
$\tilde U\subset\{(r,z,\theta): r>0\}$ invariant under rotations around the $z$-axis.
Such a rotationally invariant set $\tilde U$ is uniquely determined by the two-dimensional set $U=\tilde U\cap\{\theta=0\}\subset\{(r,z): r>0\}$. For brevity we say that an axisymmetric GF is defined in $U$.

A regular isobaric surface $M_p$ is a surface of revolution determined by its {\it generatrix} $\Gamma_p=M_p\cap\,U$.
We parameterize the curve $\Gamma_p$ by the arc length
$r=R(p,t)>0,\ z=Z(p,t)$,
\begin{equation}
R'_t{}^2+Z'_t{}^2=1.
                                          \label{6.1}
\end{equation}
The variables $(t,\theta)$ serve as coordinates on the isobaric surface $M_p$.
Since the vector field $u$ is tangent to $M_p$, it is uniquely represented as
$u=u^t(p,t)\frac{\partial}{\partial t}+u^\theta(p,t)\frac{\partial}{\partial\theta}$,
where
$\big(u^t(p,t),u^\theta(p,t)\big)$ are {\it geometric coordinates} of $u$.
{\it Physical coordinates} of $u$ are {defined} by
\begin{equation}
u=u_r(r,z)e_r+u_z(r,z)e_z+u_\theta(r,z)e_\theta,
                                          \label{6.2}
\end{equation}
where $e_r,e_z,e_\theta$ are unit coordinate vectors. {Physical and geometric coordinates are related by}
\begin{equation}
u_r=R'_tu^t,\quad u_z=Z'_tu^t,\quad u_\theta=Ru^\theta.
                                          \label{6.3}
\end{equation}
We are going to write down differential equations for a GF in terms of the functions $(R,Z,u^t, u^\theta)$.
{First of all}, the first quadratic form of $M_p$ in coordinates $(t,\theta)$ is
\begin{equation}
I=dt^2+R^2d\theta^2.
                                          \label{6.4}
\end{equation}
Christoffel symbols of the metric are
$\Gamma^t_{\theta\theta}=-RR'_t,\
\Gamma^\theta_{t\theta}=\frac{R'_t}{R},\
\Gamma^t_{tt}=\Gamma^t_{t\theta}=\Gamma^\theta_{tt}=\Gamma^\theta_{\theta\theta}=0$.
Using these formulas, we calculate
$$
\nabla_{\!t}u^t=\frac{\partial u^t}{\partial t},\quad
\nabla_{\!\theta}u^t=-RR'_t\,u^\theta,\quad
\nabla_{\!t}u^\theta=\frac{\partial u^\theta}{\partial t}+\frac{R'_t}{R}\,u^\theta,\quad
\nabla_{\!\theta}u^\theta=\frac{R'_t}{R}\,u^t.
$$
In particular,
\begin{equation}
\mbox{div}\,u=\nabla_{\!t}u^t+\nabla_{\!\theta}u^\theta
=\frac{\partial u^t}{\partial t}+\frac{R'_t}{R}\,u^t.
                                          \label{6.6}
\end{equation}
The restriction of $u$ to $M_p$ is a geodesic vector field, i.e., $\nabla_{\!u}u=0$.
This gives the system
\begin{equation}\label{6.7}
 u^t\,\frac{\partial u^t}{\partial t}-RR'_t(u^\theta)^2=0,\quad
 u^t\,\frac{\partial u^\theta}{\partial t}+\frac{2R'_t}{R}\,u^tu^\theta=0.
\end{equation}
Let us express $\vert{\rm grad}\,p\,\vert$ in terms of the functions $R(p,t)$ and $Z(p,t)$. Let $p=p(r,z)$ be the pressure in cylindric coordinates (it is independent of $\theta$). We have the identity
$
 p\big(R(p,t),Z(p,t)\big)=p.
$
Differentiating the identity with respect to $p$ and $t$, we arrive to the linear algebraic system with unknowns $p'_r\big(R(p,t),Z(p,t)\big)$ and $p'_z\big(R(p,t),Z(p,t)\big)$
$$
\begin{aligned}
R'_p(p,t)p'_r\big(R(p,t),Z(p,t)\big)+Z'_p(p,t)p'_z\big(R(p,t),Z(p,t)\big)&=1,\\
R'_t(p,t)p'_r\big(R(p,t),Z(p,t)\big)+Z'_t(p,t)p'_z\big(R(p,t),Z(p,t)\big)&=0.
\end{aligned}
$$
Solving the system, we have
$$
p'_r\big(R(p,t),Z(p,t)\big)=J^{-1}(p,t)Z'_t(p,t),\ p'_z\big(R(p,t),Z(p,t)\big)=-J^{-1}(p,t)R'_t(p,t),
$$
where
\begin{equation}
J=\left\vert\begin{array}{cc}R'_p&R'_t\\Z'_p&Z'_t\end{array}\right\vert
                                          \label{6.9}
\end{equation}
is the Jacobian of the transformation
$r=R(p,t),\quad z=Z(p,t)$.
The Jacobian does not vanish. This implies with the help of \eqref{6.1}
\begin{equation}
\vert{\rm grad}\,p\,\vert=\vert J\vert^{-1}.
                                          \label{6.11}
\end{equation}
Substituting expressions \eqref{6.6} and \eqref{6.11} into \eqref{2.3}, we arrive to the equation
\begin{equation}
\frac{\partial u^t}{\partial t}+\Big(\frac{R'_t}{R}+\frac{J'_t}{J}\Big)u^t=0.
                                          \label{6.12}
\end{equation}
The second quadratic form of the surface $M_p$ is expressed in coordinates $(t,\theta)$ by
$$
II=-((R'_tZ''_{tt}-Z'_tR''_{tt})\,dt^2+RZ'_t\,d\theta^2).
$$
The sign on the right-hand side is chosen taking our agreement into account: the unit normal vector to the surface $M_p$ must coincide with
$\frac{{\rm grad}\,p}{\vert{\rm grad}\,p\,\vert}$.
Recall that the curvature $\kappa=\kappa(p,t)$ of the plane curve $r=R(p,t),z=Z(p,t)$ is expressed, under the condition \eqref{6.1}, by
\begin{equation}
\kappa=R'_tZ''_{tt}-Z'_tR''_{tt}.
                                          \label{6.13}
\end{equation}
Using the latter equality, the previous formula takes the form
\begin{equation}
II=-(\kappa\,dt^2+RZ'_t\,d\theta^2).
                                          \label{6.14}
\end{equation}
By \eqref{6.11} and \eqref{6.14}, the equation \eqref{2.10} takes the form
\begin{equation}
\kappa(u^t)^2+RZ'_t(u^\theta)^2=\vert J\vert^{-1}.
                                          \label{6.15}
\end{equation}
We {have thus} obtained the system of five equations \eqref{6.1}, (\ref{6.7}a,b), \eqref{6.12}, \eqref{6.15} in four unknown functions $(R,Z,u^t,u^\theta)$. The functions $J$ and $\kappa$ participating in the system are expressed through $(R,Z)$ by \eqref{6.9} and \eqref{6.13} respectively.
We proceed to the analysis of the system.

All isobaric surfaces $M_p$ under consideration are assumed to be regular and connected. The equation \eqref{6.12} implies the following alternative for every $p_0$: either $u^t(p_0,t)\neq0$ for all $t$ or $u^t(p_0,t)\equiv0$. The second case of the alternative is realized in the example \eqref{3.2} with $a=0$. The converse statement is true at least partially: If $u^t(p_0,t)\equiv0$, then $M_{p_0}$ coincides with the cylinder $\{r=\mbox{const}>0\}$ and particle trajectories living on $M_{p_0}$ are horizontal circles. Indeed, if $u^t(p_0,t)\equiv0$ then, as is seen from \eqref{6.1} and (\ref{6.7}b), $R'_t(p_0,t)\equiv0$ and $Z'_t(p_0,t)\equiv\pm1$. Nevertheless, it is possible that $u^t(p_0,t)\equiv0$ but $u^t(p,t)\neq0$ for $p$ close to $p_0$; the corresponding example can be constructed by a slight modification of~\eqref{3.2}.

To avoid degenerate cases of the previous paragraph, we additionally assume $u^t(p,t)\neq0$ for all $(p,t)$.
Recall that we study flows up to the equivalence \eqref{1.4}.
{In particular, GFs $(u,p)$ and $(-u,p)$ are equivalent.}
Therefore the latter assumption can be written without lost of generality in the form
\begin{equation}
u^t(p,t)>0\quad\mbox{for all}\quad(p,t).
                                          \label{6.16}
\end{equation}
The equation (\ref{6.7}a) simplifies under the assumption \eqref{6.16} to the following {one}:
\begin{equation}
\frac{\partial u^\theta}{\partial t}+\frac{2R'_t}{R}\,u^\theta=0.
                                          \label{6.17}
\end{equation}
If $u^\theta(p,t_0)=0$ for some $t_0$, then \eqref{6.17} implies that $u^\theta(p,t)=0$ for all $t$.
On the other hand, assuming that $u^\theta(p,t)\neq0$ for all $t$ and for a fixed $p$, we can rewrite \eqref{6.17} in the form
$\frac{\partial(\log u^\theta)}{\partial t}+\frac{\partial(\log R^2)}{\partial t}=0$.
From this we get
\begin{equation}
u^\theta(p,t)=\frac{b(p)}{R^2(p,t)}
                                          \label{6.18}
\end{equation}
with some function $b(p)$. This equality is also true for such $p_0$ that $u^\theta(p_0,t)=0$ for all $t$, just by setting $b(p_0)=0$.
The equality \eqref{6.18} implies smoothness of the function $b$.

Substituting the expression \eqref{6.18} into (\ref{6.7}a), we get
$\frac{\partial (u^t)^2}{\partial t}=2\,\frac{R'_t}{R^3}b^2(p)$,
that can be written in the form
$\frac{\partial (u^t)^2}{\partial t}=-\frac{\partial}{\partial t}\big(\frac{b^2(p)}{R^2}\big)$.
From this we obtain
\begin{equation}
u^t(p,t)=\frac{\sqrt{d(p)R^2(p,t)-b^2(p)}}{R(p,t)},
                                          \label{6.19}
\end{equation}
where $d(p)$ is a smooth function satisfying
\begin{equation}
d(p)R^2(p,t)-b^2(p)>0.
                                          \label{6.20}
\end{equation}
By  \eqref{6.4},
$\vert u\vert^2=(u^t)^2+R^2(u^\theta)^2$.
Substituting values \eqref{6.18}--\eqref{6.19}, we get
\begin{equation}
\vert u(p,t)\vert^2=d(p).
                                          \label{6.21}
\end{equation}

{We have thus discovered the important phenomenon:}
for an axisymmetric GF, all particles living on an isobaric surface $M_p$ move with the same speed. In other words, constants $c$ and $C$ in the Bernoulli law \eqref{2.1} can be expressed through each other.
{The phenomenon is actually expected since it holds for the Grad --- Shafranov ansatz \cite{CLV}.}
Most likely, the phenomenon is absent for a general (not axisymmetric) GF,
at least we cannot derive a relation {like} \eqref{6.21} from \eqref{5.12}--\eqref{5.15}.

Studying GFs up to equivalence, we can multiply $u(p,t)$ by a non-vanishing smooth function $\varphi(p)$.
This opportunity was already
used to fix the sign of $u^t$ in \eqref{6.16}. We still have the freedom of multiplying $u(p,t)$ by a positive smooth function $\varphi(p)$ together with the corresponding change of the pressure. Choosing $\varphi(p)=d(p)^{-1/2}$ and denoting the new GF by $(u,p)$ again, we simplify \eqref{6.21} to the following {one}:
\begin{equation}
\vert u(p,t)\vert^2=1.
                                          \label{6.22}
\end{equation}
The inequality \eqref{6.20} becomes now $\vert b(p)\vert<R(p,t)$,
and formula \eqref{6.19} takes the form
\begin{equation}
u^t(p,t)=\frac{\sqrt{R^2(p,t)-b^2(p)}}{R(p,t)},
                                          \label{6.24}
\end{equation}

We continue our analysis under assumptions \eqref{6.16} and \eqref{6.22}.
In virtue of \eqref{6.16}, the equation \eqref{6.12} can be written in the form
$
\frac{\partial(\log u^t)}{\partial t}+\frac{\partial(\log (\vert J\vert R))}{\partial t}=0.
$
This implies
\begin{equation}
u^t(p,t)=\frac{\alpha^2(p)}{\vert J(p,t)\vert R(p,t)}
                                          \label{6.25}
\end{equation}
with some positive smooth function $\alpha(p)$. Comparing \eqref{6.24} and \eqref{6.25}, we arrive to the equation
$\sqrt{R^2(p,t)-b^2(p)}\,\vert J(p,t)\vert=\alpha^2(p)$.

Finally, we simplify the equation \eqref{6.15}. Substituting expressions \eqref{6.18} and \eqref{6.24} for $u^\theta$ and $u^t$ into \eqref{6.15}, we obtain
$\kappa R(R^2-b^2)+b^2Z'_t= R^3\vert J\vert^{-1}$.
Expressing $\vert J\vert$ from \eqref{6.25} and substituting the expression into the latter formula, we arrive to the equation
$
\kappa R(R^2-b^2)+b^2Z'_t=\frac{R^3\sqrt{R^2-b^2}}{\alpha^2}.
$
We have thus proved the following

\begin{theorem} \label{Th6.1}
Let an axisymmetric {Gavrilov flow} $(u,p)$ be defined on an open set $U\subset\{(r,z)\mid r>0\}$.
Assume that every isobaric surface $M_p$ is regular and connected. For the surface of revolution $M_p$, let
$r=R(p,t)>0,\ z=Z(p,t)$ be the arc length parametrization of the generatrix $\Gamma_p$ of $M_p$.
Assume also that in the representation
$u=u^t(p,t)\frac{\partial}{\partial t}+u^\theta(p,t)\frac{\partial}{\partial \theta}$
the function $u^t(p,t)$ does not vanish. Then

{\rm (1)} there exists a smooth positive function $d(p)$ such that
$\vert u\vert^2=d(p)$.
Replacing $(u,p)$ with an equivalent {Gavrilov flow} and denoting the new flow by $(u,p)$ again, we can assume without lost of generality that $u^t(p,t)>0$ and
\begin{equation}
\vert u\vert^2=1.
                                          \label{6.28}
\end{equation}

{\rm (2)} under the assumption \eqref{6.28}, the functions $R(p,t)$ and $Z(p,t)$ satisfy the equations
\begin{eqnarray}
R'_t{}^2+Z'_t{}^2=1,
                                          \label{6.29} \\
\sqrt{R^2-b^2(p)}\,\big\vert R'_pZ'_t-R'_tZ'_p\big\vert=\alpha^2(p),
                                          \label{6.30} \\
\kappa R\big(R^2-b^2(p)\big)+b^2(p)Z'_t=\frac{R^3\sqrt{R^2-b^2(p)}}{\alpha^2(p)}
                                          \label{6.31}
\end{eqnarray}
with some smooth functions $\alpha(p)>0$ and $b(p)$, where $\kappa=\kappa(p,t)$ is the curvature of the plane curve $r=R(p,t),z=Z(p,t)$.
The functions $u^t(p,t)$ and $u^\theta(p,t)$ are expressed through $\big(R(p,t),b(p)\big)$ by
\begin{equation}
u^t=\frac{\sqrt{R^2-b^2(p)}}{R},\quad
u^\theta=\frac{b(p)}{R^2}.
                                          \label{6.32}
\end{equation}
\end{theorem}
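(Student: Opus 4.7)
The proof amounts to collecting and organizing the derivations that were carried out just before the theorem statement, so my plan is to present it as a clean recap that extracts the assertions (1) and (2) from the five-equation system \eqref{6.1}, (\ref{6.7}a,b), \eqref{6.12}, \eqref{6.15}. The skeleton is: derive the explicit forms of $u^\theta$ and $u^t$ from the geodesic equation, then read off the conservation law $|u|^2 = d(p)$, then exploit the equivalence \eqref{1.4} to normalize $d\equiv1$, and finally substitute back into the divergence equation and the second fundamental form equation to obtain \eqref{6.30} and \eqref{6.31}.

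First, I would invoke the assumption $u^t(p,t)\neq 0$ (which, after replacing $u$ by $-u$ if necessary, can be taken in the form $u^t>0$ as in \eqref{6.16}) to cancel $u^t$ in (\ref{6.7}b) and obtain \eqref{6.17}. Integrating \eqref{6.17} in $t$ yields $u^\theta(p,t)=b(p)/R^2(p,t)$ for some smooth function $b(p)$, as was done around \eqref{6.18}. Substituting this into (\ref{6.7}a), the latter becomes the ODE $\partial_t\big((u^t)^2+b^2(p)/R^2\big)=0$, whose integration gives $u^t=\sqrt{d(p)R^2-b^2(p)}/R$ with some smooth $d(p)$; positivity of the argument is forced by $u^t>0$. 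Combining these with \eqref{6.4} gives $|u|^2 = (u^t)^2+R^2(u^\theta)^2 = d(p)$, proving part (1). Multiplying $u$ by $\varphi(p)=d(p)^{-1/2}$ (with the corresponding change of pressure under \eqref{1.4}) normalizes $d\equiv 1$, yielding simultaneously \eqref{6.28} and the formulas \eqref{6.32}; meanwhile \eqref{6.29} is just the restatement of the arc length parametrization \eqref{6.1}.

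For part (2), two equations remain to establish: \eqref{6.30} from the divergence identity \eqref{6.12} and \eqref{6.31} from the second fundamental form identity \eqref{6.15}. Using $u^t>0$, the equation \eqref{6.12} is equivalent to $\partial_t\log(u^t |J| R)=0$, giving $u^t = \alpha^2(p)/(|J|R)$ for some positive smooth $\alpha(p)$. Equating this with the formula for $u^t$ in \eqref{6.32} and clearing denominators yields \eqref{6.30}. Finally, I substitute $u^t$ and $u^\theta$ from \eqref{6.32} into \eqref{6.15}, replace $|J|^{-1}$ on the right-hand side by $|\mathrm{grad}\,p|$ via \eqref{6.11}, and use \eqref{6.25} to eliminate $|J|$ in favor of $\alpha(p)$; after multiplication by $R$ this gives exactly \eqref{6.31}.

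There is essentially no hard step: each assertion was already proved in the preceding computation, and the theorem's role is to package the final conclusion. The only delicate point is the normalization used to pass from \eqref{6.21} to \eqref{6.28}; I would make sure to note that this is a lossless use of the equivalence \eqref{1.4} (because $d(p)>0$ by \eqref{6.16} and $|u|\neq 0$ at regular points), and that after the rescaling the positivity of $u^t$ as well as the smoothness of $\alpha$ and $b$ are preserved. Everything else reduces to substitution and the identity $R_t'{}^2+Z_t'{}^2=1$.
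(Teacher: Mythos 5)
Your proposal is correct and follows essentially the same route as the paper: the theorem is indeed just the packaging of the preceding derivation, and you reproduce each step — integrating the geodesic equations to get \eqref{6.18}–\eqref{6.19}, reading off $|u|^2=d(p)$, normalizing via the equivalence, and then converting \eqref{6.12} and \eqref{6.15} into \eqref{6.30} and \eqref{6.31} using \eqref{6.25}. (Minor remark: your labeling of the two equations in \eqref{6.7} is in fact more consistent than the paper's own references to ``(\ref{6.7}a)''.)
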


Let us make some remarks on Theorem \ref{Th6.1}.

1. First, we attract reader's attention to the hypothesis: $M_p$ are connected surfaces. Otherwise functions $\alpha(p)$ and $b(p)$ can be different on different connected components.

2. Equations \eqref{6.29}--\eqref{6.32} are invariant under some transformations. First, the parameter $t$ is defined up to a shift, i.e., nothing changes after the replacement $R(p,t)=\tilde R\big(p,t+t_0(p)\big),Z(p,t)=\tilde Z\big(p,t+t_0(p)\big)$. Second, the equations are invariant under the changes
$Z(p,t)=\tilde Z(p,t)+z_0$ and $Z(p,t)=-\tilde Z(p,t),\ \kappa=-\tilde\kappa$,
which mean a vertical shift of the origin in ${\mathbb R}^3$ and the change of the direction of the $z$-axis, respectively.

3. Compared {with} Proposition \ref{P5.1}, Theorem \ref{Th6.1} has an important advantage. The unknown functions $(f,u^x,u^y)$ are not separated in the system \eqref{5.12}--\eqref{5.15} and, probably, cannot be separated. On the other hand, equations \eqref{6.29}--\eqref{6.31} involve only the functions $(R,Z)$ that determine isobaric surfaces $M_p$ (the equations involve also $\alpha(p)$ and $b(p)$ that appear as integration constants).
If the system \eqref{6.29}--\eqref{6.31} was solved, the velocity vector field $u$ would be determined by explicit formulas \eqref{6.32}.
Of course, the simplification is possible {due} to the Clairaut integral for the equation of geodesics on a surface of revolution.
Although the Clairaut integral is not mentioned in our proof of Theorem~\ref{Th6.1}, formulas \eqref{6.32} are actually equivalent to the Clairaut integral.


{Recall that physical coordinates $(u_r,u_z,u_\theta)$ of the velocity vector field $u$ are defined by \eqref{6.2}.
The function $u_\theta$ is called {\it the swirl}.
As follows from \eqref{6.3} and \eqref{6.32},
\begin{equation}
(u_r^2+u_z^2)(r,z)=1-\frac{\beta(p)}{r^2},\quad
u_\theta(r,z)=\frac{\sqrt{\beta(p)}}{r}.
                                              \label{8.15}
\end{equation}
}

Theorem \ref{Th6.1} has two other useful forms.

{\bf Case A.} Assume, under hypotheses of Theorem \ref{Th6.1}, that the curve $\Gamma_p$ is the graph of a function $r=f(p,z)$.
Then $f'_p\ne0$ and {equations} \eqref{6.29}--\eqref{6.31} are equivalent to the system
\begin{eqnarray}
{f'_p}^2 = \frac{\alpha(p)(1+{f'_z}^2)}{f^2-b^2(p)},
                                    \label{6.33}\\
\alpha(p) f f''_{zz} = b^2(p) {f'_p}^2 - f^3f'_p(1+{f'_z}^2)
                                    \label{6.34}
\end{eqnarray}
with the same functions $\alpha(p)>0$ and $b(p)$. The function $f$ must satisfy $f(z,p)>\vert b(p)\vert$.
The velocity vector $u$ is now represented as $u=u^z(z,p)\frac{\partial}{\partial z}+u^\theta(z,p)\frac{\partial}{\partial\theta}$, where the functions $u^z$ and $u^\theta$ are expressed through $(f,\alpha,b)$ by
$u^z=\mbox{sgn}(u^z)\,\frac{\sqrt{\alpha(p)}}{f\vert f'_p\vert},\ u^\theta=\frac{b(p)}{f^2}$
{and} $\mbox{sgn}(u^z)=\pm1$ is the sign of $u^z$ that is assumed do not vanish.

{\bf Case B.} Assume, under hypotheses of Theorem \ref{Th6.1}, that the curve $\Gamma_p$ is the graph of a function $z=g(p,r)$.
Then equations \eqref{6.29}--\eqref{6.31} are equivalent to the system
\begin{eqnarray}
\big(r^2-b^2(p)\big)g'_p{}^2-\alpha(p)(1+g'_r{}^2)=0,
                                          \label{6.36}\\
r\alpha(p)g''_{rr}+b^2(p)g'_r g'_p{}^2+r^3 g'_p(1+g'_r{}^2)=0
                                          \label{6.37}
\end{eqnarray}
with the same functions $\alpha(p)>0$ and $b(p)$. The function $g(p,r)$ is considered for $r\in\big(r_1(p),r_2(p)\big)$ with $\vert b(p)\vert<r_1(p)$.
The velocity vector $u$ is now represented as $u=u^r(p,r)\frac{\partial}{\partial r}+u^\theta(p,r)\frac{\partial}{\partial \theta}$, where the functions $u^r$ and $u^\theta$ are expressed through $(g,\alpha,b)$ by
$u^r=\mbox{sgn}(u^r)\,\frac{\sqrt{\alpha(p)}}{r\vert g'_p\vert},\ u^\theta=\frac{b(p)}{r^2}$
{and} $\mbox{sgn}(u^r)=\pm1$ is the sign of $u^r$ that does not~vanish.

\section{Consistency conditions}
\label{sec:cond}

{We will first recall} some basic facts from theory of first order PDEs following \cite[Part~I, Section~14]{Ka2}.
Let us consider the system of two first order PDEs
\begin{equation}
F(x,y,z,p,q)=0,\quad G(x,y,z,p,q)=0,
                                              \label{7.1}
\end{equation}
where $z=z(x,y)$ is an unknown function and $p=z'_x,q=z'_y$. We assume $F$ and $G$ to be sufficiently smooth functions defined for
$(x,y,z)\in U$ and for all $(p,q)\in{\mathbb R}^2$, where $U\subset{\mathbb R}^3$ is an open set. The system \eqref{7.1} is supplied with the initial condition
\begin{equation}
z(x_0,y_0)=z_0
                                              \label{7.2}
\end{equation}
for a point $(x_0,y_0,z_0)\in U$.
The {\it Jacobi brackets} (sometimes also called {\it Mayer brackets}) of functions $F(x,y,z,p,q)$ and $G(x,y,z,p,q)$ are defined by
\begin{equation}
[F,G]=(F'_x+pF'_z)G'_p-(G'_x+p\,G'_z)F'_p+(F'_y+qF'_z)G'_q-(G'_y+qG'_z)F'_q.
                                              \label{7.3}
\end{equation}
The system \eqref{7.1} is said to be an {\it involutory system} if
$[F,G]\equiv0$ for $(x,y;z,p,q)\in U\times{\mathbb R}^2$.
The system \eqref{7.1} is said to be a {\it complete system on the open set} $U\times{\mathbb R}^2$ if the equation
\begin{equation}
[F,G]=0
                                              \label{7.4}
\end{equation}
is an algebraic corollary of the system \eqref{7.1}, i.e., if \eqref{7.4} holds
for $(x,y;z,p,q)\in U\times{\mathbb R}^2$ satisfying \eqref{7.1}. The equation \eqref{7.4} is called the {\it consistency condition} (or {\it integrability condition}) for the system \eqref{7.1}.
In the case of a complete system, for an arbitrary point $(x_0,y_0,z_0)\in U$, the initial value problem \eqref{7.1}--\eqref{7.2} has a unique solution at least in some neighborhood of the point $(x_0,y_0)$. Several methods are known for the numerical solution {of} the IVP \eqref{7.1}--\eqref{7.2}, the Mayer method is the most popular one \cite{Ka2}.

\medskip

{We} return to axisymmetric GFs.
Under certain additional conditions, systems \eqref{6.33}--\eqref{6.34} and \eqref{6.36}--\eqref{6.37} are equivalent. We study the system \eqref{6.33}--\eqref{6.34} as the simplest one. Recall that the system is considered in a neighborhood of a regular point, where the transform $(p,z)\mapsto\big(f(p,z),z\big)$ is one-to-one. Therefore the derivative $f'_p$ does not vanish.
Note that only $b^2$ is involved in \eqref{6.33}--\eqref{6.34}, not the function $b$ itself.
To simplify our formulas a {little} bit, we introduce the function $\beta(p)=b^2(p)\ge0$ and rewrite the system \eqref{6.33}--\eqref{6.34} as
\begin{eqnarray}
(f^2-\beta){f'_p}^2 -\alpha(1+{f'_z}^2)=0,
                                    \label{7.5}\\
\alpha f f''_{zz} -\beta {f'_p}^2 + f^3f'_p(1+{f'_z}^2)=0.
                                    \label{7.6}
\end{eqnarray}
{The function $f$ is assumed to satisfy} the inequality
\begin{equation}
f(p,z)>\sqrt{\beta(p)}.
                                    \label{7.7}
\end{equation}
Given functions $\alpha(p)>0$ and $\beta(p)\ge0$, \eqref{7.5}--\eqref{7.6} is an overdetermined system of two PDEs in one unknown function $f(p,z)$.
The overdeterminess is caused by the circumstance mentioned in Introduction: a GF is defined by the overdetermined system \eqref{1.1}--\eqref{1.3}.
We pose the question: \textit{What conditions should be imposed on $\big(\alpha(p),\beta(p)\big)$ for solvability of the system \eqref{7.5}--\eqref{7.6} at least locally, i.e., in a neighborhood of a given point} $(p_0,z_0)$?
For a fixed $p$, \eqref{7.6} can be considered as a second order ODE with an unknown function $f_p(z)=f(p,z)$.
{Observe} that the variable $z$ does not explicitly participate in \eqref{7.6}.
{As well known \cite[Ssection~15.3]{Ka}, such} an equation can be reduced to a first order ODE.
{The} observation is realized by the following statement.

\begin{lemma} \label{L7.1}
Let $C^1$-functions $\alpha(p)>0$ and $\beta(p)\ge0$ be defined on an interval $(p_1,p_2)$, where $-\infty\le p_1<p_2\le\infty$.
Then the following {statements}
are valid.

{\rm (1)} Let $f(p,z)$ be a solution to the system \eqref{7.5}--\eqref{7.6} on a rectangle
\begin{equation}
(p,z)\in(p_1,p_2)\times(z_1,z_2),
                                    \label{7.8}
\end{equation}
and let the inequality \eqref{7.7} be valid on the rectangle.
Then there exists a $C^1$-function $\gamma(p)$ on the interval $(p_1,p_2)$ satisfying the equation
\begin{equation}
f^2(\varepsilon f^2+\gamma)^2(1+f'_z{}^2)-4\alpha(f^2-\beta)=0
                                    \label{7.9}
\end{equation}
and the inequalities
\begin{equation}
0 < \varepsilon f^2+\gamma \le \frac{2\alpha^{1/2}\sqrt{f^2 - \beta}}{f},
                                    \label{7.10}
\end{equation}
where $\varepsilon=\pm1$ is the sign of $f'_p$ that does not vanish.

{\rm (2)} Conversely, let
$f(p,z)>0$ and $\gamma(p)$ satisfy \eqref{7.5} and \eqref{7.9}--\eqref{7.10}. Assume additionally that, for every $p\in(p_1,p_2)$, the derivative $f'_z(p,z)$ is not identically equal to zero on any interval $(z'_1,z'_2)\subset(z_1,z_2)$. Then $f$ solves
\eqref{7.5}--\eqref{7.6} on the rectangle \eqref{7.8}.
\end{lemma}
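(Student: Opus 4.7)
The guiding observation is that, for each fixed $p$, equation \eqref{7.6} is a second-order ODE in $z$ that does not explicitly contain $z$ (after \eqref{7.5} is used to express $f'_p{}^2$ in terms of $f$ and $f'_z$). Such an ODE admits a first integral, and the function $\gamma(p)$ of the lemma plays the role of the integration ``constant''. Both halves of the lemma therefore hinge on a single identity: the scalar
\[
G(p,z):=\frac{2\sqrt{\alpha(p)\big(f^2(p,z)-\beta(p)\big)}}{f(p,z)\sqrt{1+f'_z{}^2(p,z)}}-\varepsilon f^2(p,z)
\]
is $z$-independent modulo \eqref{7.5} and \eqref{7.6}. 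Relation \eqref{7.9} combined with the positivity part of \eqref{7.10} is exactly the statement $G=\gamma$ on the branch $\varepsilon f^2+\gamma>0$.

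For Part (1) I would set $\gamma(p,z):=G(p,z)$ and verify by direct differentiation that $\partial\gamma/\partial z\equiv 0$. A straightforward computation shows that $\partial G/\partial z=2f'_z\cdot\Psi$, where $\Psi$ is an expression in $f,f'_z,f''_{zz}$ linear in $f''_{zz}$ and not involving $f'_p$ explicitly. Substituting $\alpha f f''_{zz}=\beta f'_p{}^2-f^3f'_p(1+f'_z{}^2)$ from \eqref{7.6} and then eliminating $f'_p{}^2$ by \eqref{7.5} together with the signed identity $(f^2-\beta)f'_p=\varepsilon\sqrt{\alpha(f^2-\beta)}\sqrt{1+f'_z{}^2}$ (obtained from \eqref{7.5} once the sign $\varepsilon=\mathrm{sgn}(f'_p)$ is fixed), all terms of $\Psi$ cancel identically. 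Hence $\partial\gamma/\partial z\equiv 0$. The equation \eqref{7.9} then holds by construction, the upper bound in \eqref{7.10} follows from $\sqrt{1+f'_z{}^2}\ge 1$, and positivity of $\varepsilon f^2+\gamma$ is automatic from the defining formula. The $C^1$-regularity of $\gamma$ is inherited from $\alpha,\beta$ and $f$.

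For Part (2), \eqref{7.9} together with $\varepsilon f^2+\gamma>0$ from \eqref{7.10} determines $\varepsilon f^2+\gamma$ uniquely and rewrites the hypothesis as $G(p,z)=\gamma(p)$; differentiating in $z$ yields $0=\partial G/\partial z=2f'_z\cdot\Psi$ with the same $\Psi$ as above. On the open set $\{(p,z):f'_z\neq 0\}$ one therefore has $\Psi=0$, which, after clearing denominators and invoking \eqref{7.5} with the sign fixed by \eqref{7.10}, is exactly \eqref{7.6}. The hypothesis that $f'_z$ is not identically zero on any $z$-subinterval $(z_1',z_2')\subset(z_1,z_2)$ makes $\{f'_z\neq 0\}$ dense in the rectangle \eqref{7.8}, so \eqref{7.6} extends to the whole rectangle by continuity of $f''_{zz}$.

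\emph{Main obstacle.} The delicate point throughout is sign bookkeeping: \eqref{7.5} determines only $|f'_p|$, and the correct sign $\varepsilon$ must be inserted at each step of the elimination in order to land on \eqref{7.6} rather than on a sign-twisted variant. This is precisely the role played by the positivity inequality in \eqref{7.10}, which singles out one branch of the square root in \eqref{7.9}. The underlying algebra itself is short once the shorthands $A:=\sqrt{\alpha(f^2-\beta)}/f$ and $B:=\sqrt{1+f'_z{}^2}$ are introduced: \eqref{7.5} reads $f'_p=\varepsilon\alpha B/(fA)$, $G$ becomes $2A/B-\varepsilon f^2$, and the whole verification reduces to the identity $\partial_z(2A/B)=2\varepsilon ff'_z$ modulo \eqref{7.5} and \eqref{7.6}.
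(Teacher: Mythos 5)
Your proposal is correct and follows essentially the same route as the paper: the quantity $G=2\sqrt{\alpha(f^2-\beta)}\big/\big(f\sqrt{1+f_z'^2}\,\big)-\varepsilon f^2$ you introduce is exactly (up to the harmless factor $2\alpha^{1/2}$) the first integral the authors exhibit in their equation \eqref{7.13}, the differentiation produces the same factor $f'_z$ times an expression that vanishes by \eqref{7.5}--\eqref{7.6}, and the converse is handled by the same density-of-$\{f'_z\neq0\}$ argument. No gaps.
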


\begin{proof}
{First of all} we find from \eqref{7.5}
\begin{equation}
f'_p =\varepsilon\, \frac{\alpha^{1/2}\sqrt{1+{f'_z}^2}}{\sqrt{f^2-\beta}},
                                    \label{7.11}
\end{equation}
where $\varepsilon=\pm1$ is the sign of $f'_p$,
and substitute the expression into \eqref{7.6}
\begin{equation}
f''_{zz} = \frac{\beta(1+{f'_z}^2)}{f(f^2-\beta)} -\varepsilon\, \frac{f^2(1+{f'_z}^2)^{3/2}}{\alpha^{1/2}\sqrt{f^2-\beta}}.
                                    \label{7.12}
\end{equation}
Let us show that equations \eqref{7.11} and \eqref{7.12} imply
\begin{equation}
\frac{\partial}{\partial z}\Big(\frac{\sqrt{f^2-\beta}}{f\sqrt{1+{f'_z}^2}} -\varepsilon\,\frac{f^2}{2\alpha^{1/2}}\Big)=0.
                                    \label{7.13}
\end{equation}
{To this end} we implement the differentiation in \eqref{7.13}. The result can be written as
\begin{equation}
\Big(f''_{zz}-\frac{\beta(1+{f'_z}^2)}{f(f^2-\beta)}+\varepsilon\,\frac{f^2(1+{f'_z}^2)^{3/2}}{\alpha^{1/2}\sqrt{f^2-\beta}}\Big)f'_z=0.
                                    \label{7.14}
\end{equation}
By \eqref{7.12}, the left-hand side of \eqref{7.14} is identically zero. This proves \eqref{7.13}.
The equation \eqref{7.13} means the existence of a function $\gamma(p)$ such that
$
\frac{\sqrt{f^2-\beta}}{f\sqrt{1+{f'_z}^2}} -\varepsilon\,\frac{f^2}{2\alpha^{1/2}}=\frac{\gamma}{2\alpha^{1/2}}.
$
This can be written in the form
\begin{equation}
2\alpha^{1/2}\sqrt{f^2-\beta}=f\sqrt{1+{f'_z}^2}(\varepsilon f^2+\gamma).
                                    \label{7.15}
\end{equation}
By \eqref{7.7}, $f>0$ and $f^2-\beta>0$. Therefore \eqref{7.15} implies the inequalities \eqref{7.10}.
Squaring the equation \eqref{7.15}, we get \eqref{7.9}. We have proved the first statement of the lemma.

The second {statement} of the lemma is proved by {reversing presented arguments with} the following additional remark.
To pass from \eqref{7.14} to \eqref{7.12}, we need to remove the factor $f'_z$ on the left-hand side of \eqref{7.14}.
To do this, it suffices to assume that for every $p\in(p_1,p_2)$, the derivative $f'_z(p,z)$ is not identically zero on any interval $(z'_1,z'_2)\subset(z_1,z_2)$.
\end{proof}

By Lemma \ref{L7.1}, the system \eqref{7.5}--\eqref{7.6} {is equivalent to} the following {one}:
\begin{equation}
\begin{aligned}
(f^2-\beta){f'_p}^2 -\alpha(1+{f'_z}^2)&=0,\\
f^2(\varepsilon f^2+\gamma)^2(1+f'_z{}^2)-4\alpha(f^2-\beta)&=0.
\end{aligned}
                                             \label{7.16}
\end{equation}
Introducing the notations $\pi=f'_p,\quad\zeta=f'_z$, we write the system as
\begin{equation}
F(p,z,f,\pi,\zeta)=0,\quad G(p,z,f,\pi,\zeta)=0,
                                              \label{7.17}
\end{equation}
where
\begin{equation}
\begin{aligned}
F(p,z,f,\pi,\zeta)&=(f^2-\beta)\pi^2-\alpha(\zeta^2+1),\\
G(p,z,f,\pi,\zeta)&=f^2(\varepsilon f^2+\gamma)^2(\zeta^2+1)-4\alpha(f^2-\beta).
\end{aligned}
                                              \label{7.18}
\end{equation}
Assume that
$\alpha(p)>0,\, \beta(p)\ge0$ and $\gamma(p)$ are defined and smooth on an interval $(p_1,p_2)$,
where $-\infty\le p_1<p_2\le\infty$.
Then
$F$ and $G$ are defined and smooth in $U\times{\mathbb R}^2$, where
\begin{equation}
U=\big\{(p,z,f)\in{\mathbb R}^3\vert \ p_1<p<p_2,\ f>\!\sqrt{\beta},\ 0<\varepsilon f^2+\gamma<2\sqrt{\alpha}\sqrt{f^2-\beta}/f\big\}.
                                              \label{7.19}
\end{equation}
The functions $F$ and $G$ are actually independent of $z$ and depend on $p$ through the functions $\alpha(p),\beta(p),\gamma(p)$ only.
Up to notations, the system \eqref{7.17} is of the form \eqref{7.1}.

\begin{theorem} \label{Th7.1}
Given $C^1$-functions $\alpha(p)>0,\beta(p)\ge0$ and $\gamma(p)$ on an interval $(p_1,p_2)$, define $F$ and $G$ by \eqref{7.18} and consider the system of PDEs \eqref{7.17}, where $f=f(p,z)$ is an unknown function and $\pi=f'_p,\zeta=f'_z$. Define an open set $U\subset{\mathbb R}^3$ by \eqref{7.19}.
Then

{\rm 1.} The system \eqref{7.17} is complete
on $U\times{\mathbb R}^2$ if the functions $\alpha,\beta,\gamma$ satisfy
\begin{equation}
\alpha(p)=\alpha_0e^{3 p}\quad\mbox{\rm with some constant}\ \alpha_0>0,
                                    \label{7.20}
\end{equation}
\begin{equation}
\beta'+2\varepsilon\gamma'+3\beta+\varepsilon\gamma=0,
                                             \label{7.21}
\end{equation}
\begin{equation}
\gamma\beta'-2\beta\gamma'+3\beta\gamma-4\varepsilon\alpha+\varepsilon\gamma^2=0,
                                             \label{7.22}
\end{equation}
where either $\varepsilon=1$ or $\varepsilon=-1$ and $\beta+\varepsilon\gamma\ne0$.
But \eqref{7.17} is not an involutory system.

{\rm 2.} Conversely, assume \eqref{7.17} {to be} a complete system on $U\times{\mathbb R}^2$, where the open set $U\subset{\mathbb R}^3$ is defined by \eqref{7.19}. Assume additionally that for every $p_0\in(p_1,p_2)$ there exists $z_0$ such that \eqref{7.17} has a solution $f(p,z)$ in a neighborhood of $(p_0,z_0)$ satisfying $f'_z(p_0,z_0)\ne0$. Then the functions $\alpha,\beta,\gamma$ satisfy \eqref{7.20}--\eqref{7.22} with some $\varepsilon=\pm1$.
\end{theorem}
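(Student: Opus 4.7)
The plan is to compute the Jacobi bracket $[F,G]$ directly from the definition \eqref{7.3} and to determine when it vanishes on the variety $V\subset U\times{\mathbb R}^2$ defined by $F=G=0$. Since $F$ and $G$ are both independent of $z$, and $G$ is independent of $\pi$, the general bracket formula collapses to
\[
[F,G]=-2\pi(f^2-\beta)G_p-2\pi^2(f^2-\beta)G_f+4f^3\pi^2\zeta^2(\varepsilon f^2+\gamma)^2+2\alpha\zeta^2 G_f,
\]
a polynomial of degree $2$ in $\pi$ with only even powers of $\zeta$ up to degree $4$.

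The second step is to restrict to $V$. The equation $F=0$ gives $(f^2-\beta)\pi^2=\alpha(\zeta^2+1)$, and $G=0$ gives $f^2(\varepsilon f^2+\gamma)^2(\zeta^2+1)=4\alpha(f^2-\beta)$. Multiplying these two identities yields the key relation $\pi^2=4\alpha^2/[f^2(\varepsilon f^2+\gamma)^2]$ on $V$, so $\pi=2\varepsilon\alpha/[f(\varepsilon f^2+\gamma)]$ with $\varepsilon=\mathrm{sgn}(\pi)\in\{\pm1\}$, which is exactly the sign convention of Lemma \ref{L7.1}. Substituting this value of $\pi$ together with $\zeta^2+1=4\alpha(f^2-\beta)/[f^2(\varepsilon f^2+\gamma)^2]$ into the expanded expression above and clearing the common denominator $f(\varepsilon f^2+\gamma)^2$, one obtains a polynomial $N(p,f)$ of degree $6$ in $f$, containing only even powers, such that $[F,G]|_V$ is a nonzero constant multiple of $N(p,f)/[f(\varepsilon f^2+\gamma)^2]$.

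Since $U$ contains, for each fixed $p$, a full interval of $f$-values, the requirement $[F,G]|_V\equiv 0$ amounts to $N(p,f)\equiv 0$ as a polynomial in $f$, i.e., to the vanishing of the four coefficients of $f^0,f^2,f^4,f^6$. Direct expansion yields: the $f^6$-coefficient is proportional to $\alpha'-3\alpha$, giving \eqref{7.20}; after setting $\alpha'=3\alpha$, the $f^4$-coefficient simplifies to a multiple of $\beta'+2\varepsilon\gamma'+3\beta+\varepsilon\gamma$, giving \eqref{7.21}; using \eqref{7.20}--\eqref{7.21}, the $f^2$-coefficient reduces to a multiple of $\gamma'(\gamma+\varepsilon\beta)+2\alpha$, which combined with \eqref{7.21} is equivalent to \eqref{7.22} (and the hypothesis $\beta+\varepsilon\gamma\ne 0$ when $\varepsilon=-1$ is precisely what is needed to solve this relation for $\gamma'$); finally, the $f^0$-coefficient factors as $\varepsilon\alpha\beta$ times the left-hand side of \eqref{7.22} and therefore vanishes automatically. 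This proves statement~1. The system is not involutory because the coefficient of $\pi^0\zeta^4$ in $[F,G]$, read off from $2\alpha\zeta^2 G_f$, equals $4\alpha f(\varepsilon f^2+\gamma)(3\varepsilon f^2+\gamma)$, which is a nonzero polynomial in $f$ on $U$. For statement~2, the assumed existence of a solution with $f'_z(p_0,z_0)\ne 0$ guarantees that $V$ contains an open subset on which $f$ varies through a nontrivial interval and $\pi$ has a fixed sign, which we take as $\varepsilon$; completeness then forces $N(p,f)\equiv 0$ on that interval, and running the same coefficient-matching argument in reverse yields \eqref{7.20}--\eqref{7.22}. The main obstacle is the sizeable algebra of expanding and simplifying $N(p,f)$, together with careful bookkeeping of the sign $\varepsilon$ throughout each substitution.
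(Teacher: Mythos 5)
Your proposal is correct and follows essentially the same route as the paper: compute the Jacobi bracket (which collapses since $F,G$ are $z$-independent and $G$ is $\pi$-independent), use $F=G=0$ to express $\pi=2\varepsilon\alpha/[f(\varepsilon f^2+\gamma)]$ and $\zeta^2+1=4\alpha(f^2-\beta)/[f^2(\varepsilon f^2+\gamma)^2]$, reduce the restricted bracket to a polynomial identity in even powers of $f$, and match coefficients to obtain \eqref{7.20}--\eqref{7.22}, with the converse handled by the nonconstancy of $f(p_0,\cdot)$. The only cosmetic differences are that the paper clears the full denominator $f^3(\varepsilon f^2+\gamma)^3$ to get a degree-$10$ polynomial and then factors out $f^2(f^2-\beta)(\varepsilon f^2+\gamma)$ (your degree-$6$ polynomial $N$ is that degree-$10$ polynomial divided by $f^2(\varepsilon f^2+\gamma)$, so your coefficient bookkeeping is consistent), and the paper exhibits non-involutivity via the $f^7$-coefficient $-\pi^2(2\zeta^2+3)$ rather than your $\pi^0\zeta^4$-coefficient.
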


\begin{proof}
To agree \eqref{7.1} and \eqref{7.17}, we need to change the variables in \eqref{7.1} as follows:
$x:=p,\ y:=z,\ z:=f,\ p:=\pi$ and $q:=\zeta$.
Then the formula \eqref{7.3} takes the form
\begin{equation}
[F,G]=(F'_p+\pi F'_f)G'_\pi-(G'_p+\pi G'_f)F'_\pi+(F'_z+\zeta F'_f)G'_\zeta-(G'_z+\zeta G'_f)F'_\zeta.
                                              \label{7.23}
\end{equation}
We find the derivatives by differentiating \eqref{7.18}:
\begin{equation}
\begin{array}{ll}
F'_p=-\beta'\pi^2{-}\alpha'(\zeta^2+1),&
G'_p=2\gamma'f^2(\varepsilon f^2{+}\gamma)(\zeta^2{+}1)-4\alpha'(f^2{-}\beta)+4\alpha\beta',\\
[5pt]
F'_z=0,&
G'_z=0,\\
[5pt]
F'_f=2f\pi^2,&
G'_f=2f(\varepsilon f^2{+}\gamma)^2(\zeta^2{+}1){+}4\varepsilon f^3(\varepsilon f^2{+}\gamma)(\zeta^2{+}1){-}8\alpha f,\\
[5pt]
F'_\pi=2(f^2-\beta)\pi,&
G'_\pi=0,\\
[5pt]
F'_\zeta=-2\alpha\zeta,&
G'_\zeta=2f^2(\varepsilon f^2+\gamma)^2\zeta.
\end{array}
                                              \label{7.24}
\end{equation}
Since $G'_\pi=F'_z=G'_z=0$, the formula \eqref{7.23} simplifies to the following {one}:
\begin{equation}
[F,G]=-G'_pF'_\pi-\pi G'_fF'_\pi+\zeta F'_fG'_\zeta-\zeta G'_fF'_\zeta.
                                              \label{7.25}
\end{equation}
Substituting values \eqref{7.24} into \eqref{7.25}, we obtain
\begin{equation}
\begin{aligned}
&\frac{1}{4}[F,G]=-\pi(f^2-\beta)\big(\gamma'f^2(\varepsilon f^2+\gamma)(\zeta^2+1)-2\alpha'(f^2-\beta)+2\alpha\beta'\big)\\
&-\pi^2f(f^2-\beta)\big((\varepsilon f^2+\gamma)^2(\zeta^2+1)+2\varepsilon f^2(\varepsilon f^2+\gamma)(\zeta^2+1)-4\alpha\big)\\
&+\pi^2\zeta^2f^3(\varepsilon f^2+\gamma)^2
+\zeta^2\alpha f\big((\varepsilon f^2+\gamma)^2(\zeta^2+1)+2\varepsilon f^2(\varepsilon f^2+\gamma)(\zeta^2+1)-4\alpha\big).
\end{aligned}
                                              \label{7.26}
\end{equation}
The right-hand side of \eqref{7.26} is a 7th degree polynomial in $f$ and the coefficient at $f^7$ is $-\pi^2(2\zeta^2+3)\neq0$.
Thus, \eqref{7.17} is not an involutory system.
Now, we prove that \eqref{7.17} is a complete system. {To this end} we derive from \eqref{7.17}--\eqref{7.18}
\begin{equation}
\zeta^2=\frac{4\alpha(f^2-\beta)}{f^2(\varepsilon f^2+\gamma)^2}-1
                                              \label{7.27}
\end{equation}
and
$\pi=\varepsilon\,\frac{\alpha^{1/2}\sqrt{\zeta^2+1}}{\sqrt{f^2-\beta}}$,
where $\varepsilon=\pm1$ is the sign of $\pi$. We find from two last equalities
\begin{equation}
\pi=\frac{2\varepsilon\alpha}{f(\varepsilon f^2+\gamma)}.
                                              \label{7.28}
\end{equation}
Substituting expressions \eqref{7.27}--\eqref{7.28} into \eqref{7.26}, we obtain
\begin{equation}
\begin{aligned}
\frac{1}{16\alpha^2}[F,G]&=-\varepsilon\frac{2(f^2-\beta)}{f(\varepsilon f^2+\gamma)}
\big(\frac{\gamma'(f^2-\beta)}{\varepsilon f^2+\gamma}-\frac{\alpha'}{2\alpha}(f^2-\beta)+\frac{\beta'}{2}\big)\\
&-\frac{4\alpha(f^2-\beta)}{f(\varepsilon f^2+\gamma)^2}\big(\frac{f^2-\beta}{f^2}
+2\varepsilon \,\frac{f^2-\beta}{\varepsilon f^2+\gamma}-1\big)
+f\big(\frac{4\alpha(f^2-\beta)}{f^2(\varepsilon f^2+\gamma)^2}-1\big)\\
&+f\big(\frac{4\alpha(f^2-\beta)}{f^2(\varepsilon f^2+\gamma)^2}-1\big)
\big(\frac{f^2-\beta}{f^2}+2\varepsilon \,\frac{f^2-\beta}{\varepsilon f^2+\gamma}-1\big).
\end{aligned}
                                              \label{7.29}
\end{equation}
We are interested in the case when $[F,G]=0$. Equating the right-hand side of \eqref{7.29} to zero
and multiplying the resulting equality by $f^3(\varepsilon f^2+\gamma)^3$, we arrive to the equation
\begin{equation}
\begin{aligned}
&-2\varepsilon f^2(f^2-\beta)(\varepsilon f^2+\gamma)
\big(\gamma'(f^2-\beta)-\frac{\alpha'}{2\alpha}(f^2-\beta)(\varepsilon f^2+\gamma)+\frac{\beta'}{2}(\varepsilon f^2+\gamma)\big)\\
&-4\alpha(f^2-\beta)\big((f^2-\beta)(\varepsilon f^2+\gamma)+2\varepsilon f^2(f^2-\beta)-f^2(\varepsilon f^2+\gamma)\big)\\
&+4\alpha f^2(f^2-\beta)(\varepsilon f^2+\gamma)-f^4(\varepsilon f^2+\gamma)^3\\
&+\big(4\alpha(f^2{-}\beta)-f^2(\varepsilon f^2{+}\gamma)^2\big)
\big((f^2{-}\beta)(\varepsilon f^2{+}\gamma)+2\varepsilon f^2(f^2{-}\beta)-f^2(\varepsilon f^2{+}\gamma)\big)=0.
\end{aligned}
                                              \label{7.30}
\end{equation}
The left-hand side of the equation \eqref{7.30} is a polynomial of 10th degree in $f$.
It is almost unbelievable, but the degree of the polynomial can be decreased to 4. Namely, the equation \eqref{7.30} is equivalent to the following one:
\begin{equation}
\begin{aligned}
-f^2(f^2-\beta)(\varepsilon f^2+\gamma)
\big[&\big(3-\frac{\alpha'}{\alpha}\big)f^4
+\big(2\varepsilon\gamma'-\varepsilon\frac{\alpha'}{\alpha}\gamma+\frac{\alpha'}{\alpha}\beta+\beta'+4\varepsilon\gamma\big)f^2\\
&+\big(-2\varepsilon\beta\gamma'+\varepsilon\frac{\alpha'}{\alpha}\beta\gamma+\varepsilon\gamma\beta'-4\alpha+\gamma^2\big)\big]=0.
\end{aligned}
                                              \label{7.31}
\end{equation}
Indeed, a simple (though rather cumbersome) calculation {shows} that the polynomials on the left-hand sides of \eqref{7.30} and \eqref{7.31} are {identically} equal.
 By \eqref{7.19}, the factor $f^2(f^2-\beta)(\varepsilon f^2+\gamma)$ does not vanish on $U$.
Therefore the equation \eqref{7.31} is equivalent to the following {one}:
\begin{equation}
\big(3-\frac{\alpha'}{\alpha}\big)f^4
+\big(2\varepsilon\gamma'-\varepsilon\frac{\alpha'}{\alpha}\gamma+\frac{\alpha'}{\alpha}\beta+\beta'+4\varepsilon\gamma\big)f^2
+\big(-2\varepsilon\beta\gamma'+\varepsilon\frac{\alpha'}{\alpha}\beta\gamma+\varepsilon\gamma\beta'-4\alpha+\gamma^2\big)=0.
                                              \label{7.32}
\end{equation}
Equating coefficients of the polynomial on the left-hand side of \eqref{7.32} to zero, we arrive to the system of ODEs
\begin{equation}
\begin{aligned}
&\alpha'-3\alpha=0,\\
&\beta'+2\varepsilon\gamma'+3\beta+\varepsilon\gamma=0,\\
&\varepsilon\gamma\beta'-2\varepsilon\beta\gamma'+3\varepsilon\beta\gamma-4\alpha+\gamma^2=0.
\end{aligned}
                                              \label{7.33}
\end{equation}
{This is equivalent to \eqref{7.20}--\eqref{7.22}.}

{It remains to discuss the passage from \eqref{7.32} to \eqref{7.33}.}
Of course, \eqref{7.33} implies \eqref{7.32}.
This proves the first {statement} of Theorem~\ref{Th7.1}.
To prove that \eqref{7.32} implies \eqref{7.33}, we need for each $p_0\in(p_1,p_2)$ to have at least three distinct $z_1,z_2,z_3$ such that the values $f^2(p_0,z_1),\,f^2(p_0,z_2),\,f^2(p_0,z_3)$ are pairwise different and \eqref{7.32} holds at $(p_0,z_1),(p_0,z_2)$ and $(p_0,z_3)$. The existence of such $z_1,z_2,z_3$ is guaranteed by the hypothesis of the second assertion of Theorem~\ref{Th7.1}: For each $p_0\in(p_1,p_2)$, there exists $z_0$ such that the system \eqref{7.17} has a solution $f(p,z)$ in a neighborhood of $(p_0,z_0)$ satisfying $f'_z(p_0,z_0)\ne0$.
\end{proof}

{\bf Remark.} Roughly speaking, Theorem \ref{Th7.1} means that the relations \eqref{7.20}--\eqref{7.22} constitute the consistency condition for the system \eqref{7.17}. Nevertheless, we emphasize that two statements of Theorem \ref{Th7.1} are not exactly converse to each other. For example, the axisymmetric GF  \eqref{3.2} (isobaric surfaces are cylinders and particle trajectories are either circles or spiral lines) corresponds to the solution $f(p,z)=2p^{1/2}$ to the system \eqref{7.16} with $(p_1,p_2)=(0,\infty)$ and
\begin{equation}
\alpha(p)=4a_0,\ \ \beta(p)=4(1-a_0)p,\ \ \gamma(p)=4(a_0-p)\ \ (a_0=\mbox{const},\,0<a_0\le1).
                                              \label{7.34}
\end{equation}
The functions \eqref{7.34} do not satisfy the consistency conditions \eqref{7.20}--\eqref{7.22}.
For this solution, $f'_z\equiv0$ and the second statement of Theorem \ref{Th7.1} does not apply.


An analog of Lemma~\ref{L7.1} is valid for the system \eqref{6.36}--\eqref{6.37} with minor changes (the variables $(p,z)$ are replaced by $(p,r)$, the function $f$ is replaced by $g$, etc.). The~cor\-responding system of two first order PDEs looks as follows:
\begin{equation}
\begin{aligned}
(r^2-\beta)g'_p{}^2-\alpha(1+g'_r{}^2)&=0,\\
r^2(\gamma-\tau r^2)^2(1+g'_r{}^2)-4\alpha(r^2-\beta)g'_r{}^2&=0,
\end{aligned}
                                          \label{7.35}
\end{equation}
where $\alpha(p)>0,\beta(p)\ge0$ and $\gamma(p)$ are the same functions as in \eqref{7.16} and $\tau=\pm1$ is the sign of $g'_p$ that does not vanish. An analog of Theorem \ref{Th7.1} is valid for the system \eqref{7.35} with the same consistency conditions \eqref{7.20}--\eqref{7.22}.

\section{Local and $z$-periodic axisymmetric Gavrilov flows}
\label{sec:2ex}

{Here,
we discuss the numerical method for constructing axisymmetric GFs on the base of the system \eqref{7.16}.}
For the initial condition
$f(p_0,z_0)=f_0$, we can assume without lost of generality that $p_0=z_0=0$ since the system \eqref{7.16} is invariant under the change $p\to p+\mbox{const}$ and $z\to z+\mbox{const}$. Thus, the initial condition for $f$ is
\begin{equation}
f(0,0)=f_0.
                                             \label{9.1}
\end{equation}
First, we will find the functions $\alpha(p),\beta(p),\gamma(p)$. The function $\alpha$ is given explicitly by \eqref{7.20} 
with an arbitrary constant $\alpha_0=\alpha(0)>0$. The functions $\beta(p)\ge0$ and $\gamma(p)$ solve the system \eqref{7.21}--\eqref{7.22} supplied with the initial conditions
\begin{equation}
\beta(0)=\beta_0,\quad \gamma(0)=\gamma_0.
                                             \label{9.2}
\end{equation}
In particular, $\beta_0\ge0$.
For {the possibility} to write the system \eqref{7.21}--\eqref{7.22} in the form $\beta'=B(\beta,\gamma),\ \gamma'=\Gamma(\beta,\gamma)$,
we have to impose the restrictions
\begin{equation}
\beta\ge0,\quad \beta+\varepsilon\gamma\neq0.
                                             \label{9.3}
\end{equation}
In particular, $\beta_0$ and $\gamma_0$ must satisfy these inequalities.
Let $(p_1,p_2)$ be the {maximal interval} such that the solution to the Cauchy problem \eqref{7.21}--\eqref{7.22}, \eqref{9.2} exists on $(p_1,p_2)$ and satisfies \eqref{9.3}. Here, $p_1=p_1(\alpha_0,\beta_0,\gamma_0)<0<p_2=p_2(\alpha_0,\beta_0,\gamma_0)$.
By Theorem \ref{Th7.1}, the system \eqref{7.16} is complete on $U\times{\mathbb R}^2$, where $U\subset{\mathbb R}^3$ is defined by \eqref{7.19}. General theory \cite[Chapter~1, Section~14]{Ka} guarantees the existence and uniqueness of a solution to the IVP \eqref{7.16}, \eqref{9.1} for any $(0,0,f_0)\in U$ at least in some neighborhood of $(p_0,z_0)=(0,0)$.

Thus, a solution $f(p,z)$ to the IVP \eqref{7.16}, \eqref{9.1} exists in some neighborhood of $(0,0)$ and is uniquely determined by 5 constants $(\alpha_0,\beta_0,\gamma_0,f_0,\varepsilon)$
chosen so~that
\begin{equation}
\begin{aligned}
&\alpha_0>0,\quad \beta_0\ge0,\quad \beta_0+\varepsilon\gamma_0\neq0,\quad f_0>0,\quad\varepsilon=\pm1,\\
&f_0>\beta_0^{1/2},\quad 0<\varepsilon f_0^2+\gamma_0<2\alpha_0^{1/2}f_0^{-1}\sqrt{f_0^2-\beta_0}.
\end{aligned}
                                             \label{9.4}
\end{equation}
The inequalities on the second line of \eqref{9.4} come from \eqref{7.19}. Let us denote this unique solution by $f(p,z;\alpha_0,\beta_0,\gamma_0,f_0,\varepsilon)$.

In {the general case}, $f(p,z;\alpha_0,\beta_0,\gamma_0,f_0,\varepsilon)$ is a {\it local solution}, i.e., is defined in some neighborhood $U(\alpha_0,\beta_0,\gamma_0,f_0,\varepsilon)\subset{\mathbb R}^2$ of the point $(0,0)$. Nevertheless, for some values of the parameters, it can happen that $(p'_1,p'_2)\times{\mathbb R}\subset U(\alpha_0,\beta_0,\gamma_0,f_0,\varepsilon)$ for some $-\infty\le p'_1<0<p'_2\le\infty$, in such a case we speak on a {\it global solution} defined on $(p'_1,p'_2)\times{\mathbb R}$. Global solutions are of particular interest.
Unfortunately, so far we have neither necessary nor sufficient conditions on the parameters $(\alpha_0,\beta_0,\gamma_0,f_0,\varepsilon)$ for the existence of a global solution.

Everything said above in this section is valid for the numerical method based on the system \eqref{7.35}.

Global solutions most often appear due to periodicity {with the help of} the following

\begin{lemma}\label{L9.1}
Let a solution $f(p,z)$ of the system \eqref{7.16} be defined on a rectangle $(p_1,p_2)\times(z_1,z_2)$ and satisfy \eqref{7.7}.
Assume the existence of $(p_0,z_0)\in(p_1,p_2)\times(z_1,z_2)$ such that $f'_z(p_0,z_0)=0$ and $f''_{zz}(p_0,z_0)\ne0$.
Then, at least in some neighborhood of $p_0$, the solution is symmetric with respect to $z_0$, i.e.,
\begin{equation}
f(p,z)=f(p,-z+2z_0)\quad\mbox{\rm for}\ p\in(p_0-\delta,p_0+\delta)\ \mbox{\rm with some}\ \delta>0.
                                             \label{9.5}
\end{equation}
Hence, $f$ can be extended to a solution defined in $(p_0-\delta,p_0+\delta)\times(z_0-\Delta,z_0+\Delta)$ such that $(z_1,z_2)\subset(z_0-\Delta,z_0+\Delta)$.
\end{lemma}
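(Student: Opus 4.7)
The plan is to reduce the PDE system \eqref{7.16} to two equations whose right-hand sides depend only on $(p,f)$, and then to invoke ODE uniqueness. Exactly as in formulas \eqref{7.27}--\eqref{7.28} from the proof of Theorem~\ref{Th7.1}, the two equations of \eqref{7.16} can be solved algebraically for $(f'_z)^2$ and for $f'_p$, giving
\[
(f'_z)^2=\Phi(p,f):=\frac{4\alpha(p)\bigl(f^2-\beta(p)\bigr)}{f^2\bigl(\varepsilon f^2+\gamma(p)\bigr)^2}-1,\qquad
f'_p=\Psi(p,f):=\frac{2\varepsilon\alpha(p)}{f\bigl(\varepsilon f^2+\gamma(p)\bigr)},
\]
with both $\Phi$ and $\Psi$ smooth on $U$; from this point on the original system can be forgotten.

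First I would establish the one-dimensional symmetry $f(p_0,z)=f(p_0,2z_0-z)$ for $z$ near $z_0$. Set $u(z):=f(p_0,z)$. The hypothesis $f'_z(p_0,z_0)=0$ forces $\Phi(p_0,f_0)=0$. Differentiating the identity $(u')^2=\Phi(p_0,u)$ once gives $2u'u''=\Phi'_f(p_0,u)\,u'$, which yields $u''=\tfrac12\Phi'_f(p_0,u)$ wherever $u'\ne0$; differentiating a second time and evaluating at $z_0$ shows that the hypothesis $f''_{zz}(p_0,z_0)\ne0$ is equivalent to $\Phi'_f(p_0,f_0)=2f''_{zz}(p_0,z_0)\ne0$. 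In particular $z_0$ is an isolated zero of $u'$, so by continuity the smooth (Lipschitz) second-order ODE $u''=\tfrac12\Phi'_f(p_0,u)$ holds on a full neighborhood of $z_0$. The reflected function $\widetilde u(z):=u(2z_0-z)$ solves the same ODE with the same initial data $\widetilde u(z_0)=f_0$, $\widetilde u'(z_0)=0$, so Picard uniqueness gives $u\equiv\widetilde u$ near $z_0$.

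Next I would promote this one-dimensional symmetry to the two-dimensional statement \eqref{9.5}. Define $\widetilde f(p,z):=f(p,2z_0-z)$. Because $\Phi$ and $\Psi$ depend only on $(p,f)$, the function $\widetilde f$ again satisfies the reduced system; in particular it solves the first-order ODE $f'_p=\Psi(p,f)$ in $p$ with $z$ as a parameter. The previous step shows that the initial data $\widetilde f(p_0,\cdot)$ and $f(p_0,\cdot)$ coincide on some interval around $z_0$. Since $\Psi(p,f)$ is smooth, hence locally Lipschitz, in $f$ on the open set $U$, standard ODE uniqueness gives $\widetilde f(p,z)=f(p,z)$ throughout a neighborhood of $(p_0,z_0)$, which is exactly \eqref{9.5}.

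The final extension assertion is then a formality: assuming, say, $z_0-z_1<z_2-z_0$, one extends $f$ to the left of $z_1$ by \emph{defining} $f(p,z):=f(p,2z_0-z)$ for $2z_0-z_2<z<z_1$; the reflected argument lies in $(z_0,z_2)\subset(z_1,z_2)$, so the right-hand side is already known, and the extended function automatically solves \eqref{7.16} since the system is invariant under $z\mapsto 2z_0-z$. The main obstacle I anticipate lies in the uniqueness step at the turning point $z_0$: the first-order form $(u')^2=\Phi(p_0,u)$ has a square-root singularity there and does not by itself admit unique solutions, so one is forced to pass to the Lipschitz second-order form $u''=\tfrac12\Phi'_f(p_0,u)$. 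The hypothesis $f''_{zz}(p_0,z_0)\ne0$ is used precisely to justify this passage and to pin down the initial second derivative of both $u$ and $\widetilde u$.
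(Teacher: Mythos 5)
Your proof is correct, but it takes a genuinely different route from the paper's. The paper first shows that the critical point itself does not move as $p$ varies: by the implicit function theorem (using $f''_{zz}(p_0,z_0)\ne0$) there is a curve $z=\zeta(p)$ of critical points of $f(p,\cdot)$, and differentiating the \emph{first} equation of \eqref{7.16} with respect to $z$ and evaluating at $z=\zeta(p)$ gives $(f^2-\beta)f'_pf''_{pz}=0$, hence $f''_{pz}(p,\zeta(p))=0$ and $\zeta'\equiv0$; thus $f'_z(p,z_0)=0$ for all $p$ near $p_0$, and the reflection invariance of the system then yields \eqref{9.5}. You instead establish the reflection symmetry on the single slice $p=p_0$ via the Lipschitz second-order ODE $u''=\tfrac12\Phi'_f(p_0,u)$ and Picard uniqueness, and then propagate it in $p$ through the first-order ODE $f'_p=\Psi(p,f)$. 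Your version has the merit of making explicit the uniqueness mechanism that the paper leaves implicit in the final sentence ``therefore \eqref{9.8} implies \eqref{9.5}'' (which also rests, tacitly, on exactly the second-order ODE in $z$ you write down); the paper's version avoids solving for $f'_p$, and hence the sign bookkeeping hidden in \eqref{7.28} --- note that Lemma~\ref{L9.1} only assumes \eqref{7.7}, so $\varepsilon f^2+\gamma>0$ and $\mathrm{sgn}(f'_p)=\varepsilon$ are not literally among its hypotheses; you should either observe that \eqref{7.16} forces $\varepsilon f^2+\gamma\ne0$ and $f'_p\ne0$, so that $f'_p=\sigma\,2\alpha/\bigl(f(\varepsilon f^2+\gamma)\bigr)$ with a sign $\sigma=\pm1$ constant on the connected rectangle (any fixed $\sigma$ suffices for the uniqueness argument), or restrict to solutions taking values in the set \eqref{7.19}. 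One further caveat, which your argument shares with the paper's: both establish \eqref{9.5} only for $z$ near $z_0$, whereas the extension claim needs the symmetry on the full overlap in $z$; this requires iterating the second-order ODE uniqueness along the slice, which is unproblematic away from degenerate critical points but is glossed over in both treatments.
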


\begin{proof}
By the implicit function theorem, there exists a smooth function $z=\zeta(p)$ defined for $p\in(p_0-\delta,p_0+\delta)$ with some $\delta>0$ such that $\zeta(p_0)=z_0$ and
\begin{equation}
f'_z(p,\zeta(p))=0.
                                             \label{9.6}
\end{equation}
We are going to prove that $\zeta(p)$ is actually a constant function. To this end we differentiate~\eqref{9.6}:
 $f''_{pz}(p,\zeta(p))+f''_{zz}(p,\zeta(p))\,\zeta'(p)=0$.
By choosing a smaller $\delta$, we can assume that $f''_{zz}(p,\zeta(p))\neq0$. Thus, to prove the equality $\zeta'=0$, we have to demonstrate that $f''_{pz}(p,\zeta(p))=0$. To this end we differentiate the first equation of the system \eqref{7.16} with respect to $z$ (recall that $\alpha$ and $\beta$ are independent of $z$)
$
(f^2-\beta)f'_pf''_{pz}+f(f'_p)^2f'_z-\alpha f'_zf''_{zz}=0.
$
Setting $z=\zeta(p)$ here and using \eqref{9.6}, we obtain $(f^2-\beta)f'_pf''_{pz}\vert_{z=\zeta(p)}=0$.
Since $f'_p\neq0$ and $f^2-\beta>0$, this implies $f''_{pz}(p,\zeta(p))=0$. We have thus proved
\begin{equation}
f'_z(p,z_0)=0\quad\mbox{for}\ p\in(p_0-\delta,p_0+\delta).
                                             \label{9.8}
\end{equation}
The system \eqref{7.16} is invariant under the change $z\to 2z_0-z$. {Therefore} \eqref{9.8} implies~\eqref{9.5}.
\end{proof}

Now, under hypotheses of Lemma \ref{L9.1}, assume the existence of a second point
$(p'_0,z'_0)\in(p_1,p_2)\times(z_1,z_2)\ (z_0\neq z'_0)$ such that $f'_z(p'_0,z'_0)=0$ and $f''_{zz}(p'_0,z'_0)\neq0$. An analog of \eqref{9.5} holds for $z'_0$ with some $\delta'>0$. Assume additionally that
$$
({\tilde p}_1,{\tilde p}_2)=(p_0-\delta,p_0+\delta)\cap(p'_0-\delta',p'_0+\delta')\neq\emptyset.
$$
Using symmetries with respect to $z_0$ and $z'_0$, we extend $f(p,z)$ to a global $z$-periodic solution defined on $({\tilde p}_1,{\tilde p}_2)\times{\mathbb R}$.
The period is equal to $\vert z_0-z'_0\vert$ if $p_0=p'_0$, otherwise the period is equal to $2\vert z_0-z'_0\vert$.

\smallskip
An example of $z$-periodic GF is presented on Figure~\ref{fig:F1}. The solution
$
f(p,z)=f(p,z;$ $\alpha_0,\beta_0,\gamma_0,f_0,\varepsilon)
$
to the IVP \eqref{7.16}, \eqref{9.1} was computed for $\alpha_0=1,\beta_0=0.01, \gamma_0=0.5, f_0=0.97,\varepsilon=1$.
The solution to the Cauchy problem \eqref{7.21}--\eqref{7.22}, \eqref{9.2} exists for $-0.07<p<7.48$.
For $0<p< p_c\approx 0.25$, the flow turns out to be $z$-periodic with the period
approximately equal to
$2.8$. Graphs $\Gamma_i$ of functions $r=f(p_i,\,z)$ for
$p_i=0.02\,i$ and $i=0,\dots,8$ are drawn on the left-hand side of Figure~\ref{fig:F1}.
Each curve $\Gamma_i$ is the generatrix of the isobaric surface $M_{p_i}$.
The velocity vector field $u$ is drawn on the right picture for $\Gamma_0$ and $\Gamma_8$. We used \eqref{8.15} for computing physical coordinates of $u$.
The two-dimensional vector field $u_re_r+u_ze_z$ is tangent to $\Gamma_0$ and $\Gamma_8$, and the vector field $u_\theta e_\theta$ (the swirl) is orthogonal to the plane of the picture. The latter vector field is drawn by vectors orthogonal to $\Gamma_0$ and $\Gamma_8$ in order to avoid 3D pictures.
In order to make a nice picture, both fields $u_re_r+u_ze_z$ and $u_\theta e_\theta$ are drawn in the scale $1:4$, i.e., $\vert u\vert=0.25$ on the picture. But we remember that actually~$\vert u\vert=1$.

{As we have mentioned in the remark written after the proof of Theorem~\ref{Th7.1}, the axisymmetric GF \eqref{3.2} is an exception in a certain sense: it cannot be obtained in the scope of Theorem \ref{Th7.1}. We can state now the important phenomenon: the simplest axisymmetric GF \eqref{3.2} constitutes the main obstacle for our numerical method.
As is seen from the Figure~\ref{fig:F1}, for $p$ close to the critical value $p_c$, the flow is close to the $z$-independent solution \eqref{3.2}: isobaric surfaces are close to circular cylinders and particle trajectories are close to spiral lines intersecting parallels approximately at the angle of $\pi/4$. The hypothesis $f''_{zz}(p_0,z_0)\neq0$ of Lemma~\ref{L9.1} is violated at $p_0=p_c$. All our numerics become very unstable when $p$ approaches $p_c$. The flow shown on Figure~\ref{fig:F1} exists for $p>p_c$ as well, but it does not need to be periodic with the same period $\approx2.8$ for $p>p_c$.}

\begin{figure}
\noindent\includegraphics[scale=0.46]
{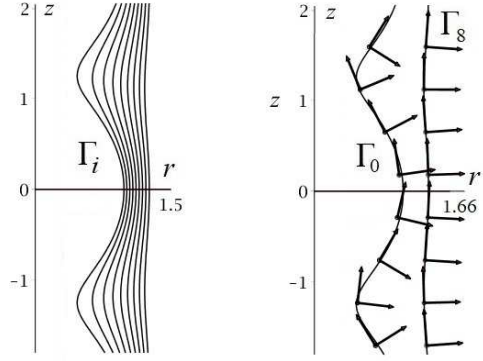}
\caption{\small
Periodic axisymmetric GF.
}
\label{fig:F1}
\end{figure}

\section{Structure of an axisymmetric {Gavrilov flow} in a neighborhood of a~minimum point of the pressure}
\label{sec:structure}

In \cite{G}, the existence of a pair $(\tilde u,\tilde p)$ is proved such that
(a)~$(1,0)$ is a non-degenerate minimum point of the function
$\tilde p=\psi(r,z)\in C^\infty(U)$ and $\psi(1,0)=0$, where $U\subset\{(r,z)\mid r>0\}$ is a neighborhood of $(1,0)$;
(b)~$(\tilde u,\tilde p)$ is an axisymmetric GF in $U\setminus\{(1,0)\}$;
(c)~the split Bernoulli law for the flow is of the simplest form
\begin{equation}
\vert\tilde u\vert^2=3\tilde p.
                                             \label{8.1}
\end{equation}
The flow $(\tilde u,\tilde p)$ does not satisfy the normalization condition \eqref{6.28}. To apply our equations, we must replace
$(\tilde u,\tilde p)$ with an equivalent GF $(u,p)$ such that $\vert u\vert^2=1$.
By the definition \eqref{1.4} of equivalent GFs,
$u=\varphi(\tilde p)\tilde u, \ {\rm grad}\,p=\varphi^2(\tilde p){\rm grad}\,\tilde p$
with some non-vanishing function $\varphi(\tilde p)$. As is seen from \eqref{8.1},
$\varphi^2(\tilde p)=1/\vert\tilde u\vert^2=1/3\,\tilde p$.~Thus,
\begin{equation}
p(r,z)=\frac{1}{3}\ln\psi(r,z)+C\quad(C=\mbox{const}),
                      \label{8.2}
\end{equation}
Hence, the pressure function gets a logarithmic singularity at the point $(1,0)$ after normalization.
Comparing \eqref{7.20} and \eqref{8.2}, we conclude that
\begin{equation}
\alpha=c\,\psi\quad(c=\mbox{const}>0).
                                              \label{8.3}
\end{equation}
Since $p\to-\infty$ as $(r,z)\to(1,0)$, it is natural to choose initial conditions for the system \eqref{7.21}--\eqref{7.22} at $p=-\infty$. The conditions are
\begin{equation}
\beta(-\infty)=\lim\limits_{p\to-\infty}\beta(p)=1/3,\quad
\gamma(-\infty)=\lim\limits_{p\to-\infty}\gamma(p)=-1.
                      \label{8.4}
\end{equation}
Indeed, for $p$ ``close" to $-\infty$, $\Gamma_p$ is a ``small" closed curve around $(1,0)$.
The tangent line to $\Gamma_p$ is vertical at some point
$(r_1,z_1)=\big(r_1(p),z_1(p)\big)\in \Gamma_p$. In a neighborhood of $(r_1,z_1)$, the curve $\Gamma_p$ is the graph of a function $r=f(p,z)$ solving the system \eqref{7.16} and satisfying
$f'_z(p,z_1)=0$. Setting $z=z_1$ in the second equation of \eqref{7.16}, we get
\begin{equation}
f^2(p,z_1)\big(\varepsilon f^2(p,z_1)+\gamma(p)\big)^2-4\alpha(p)\big(f^2(p,z_1)-\beta(p)\big)=0.
                      \label{8.5}
\end{equation}
In view of our assumption $f^2(p,z)-\beta(p)>0$, see \eqref{7.7}, the factor $\big(f^2(p,z_1)-\beta(p)\big)$ remains bounded when $p\to-\infty$. Also $\alpha(p)=\alpha_0e^{3p}\to0$ as $p\to-\infty$. Thus, the second term on the left-hand side of \eqref{8.5} runs to 0 as $p\to-\infty$. The same is true for the first term. Taking into account that $f^2(p,z_1)\to1$ as $p\to-\infty$, we obtain $\gamma(-\infty)=-\varepsilon$.

Since $\varepsilon=\pm1$ is the sign of $f'_p$ in \eqref{8.5}, by a similar analysis of the system \eqref{7.35}, we demonstrate that $\varepsilon=1$ in our setting. This proves the second equality in \eqref{8.4}.
The~first equality in \eqref{8.4} is proved similarly.

The ``Cauchy problem" \eqref{7.21}--\eqref{7.22}, \eqref{8.4} (with $\varepsilon=1$) is easily solved in series
$
\beta=\frac{1}{3}+\sum\nolimits_{k=1}^\infty\beta_k\alpha^k$ and
$
 \gamma=\frac{1}{3}+\sum\nolimits_{k=1}^\infty\gamma_k\alpha^k.
$
Equations \eqref{7.21}--\eqref{7.22} imply some recurrent relations that allow us to compute all coefficients.
In particular,
\begin{eqnarray}
\beta=\frac{1}{3}-\frac{7}{6}\alpha{+}\frac{13}{72}\alpha^2
{-}\frac{133}{1728}\alpha^3{+}\frac{575}{13824}\alpha^4
{-}\frac{2077}{82944}\alpha^5{+}\frac{37}{2304}\alpha^6+\dots,
                      \label{8.6} \\
\gamma=-1+\alpha-\frac{1}{8}\alpha^2
+\frac{7}{144}\alpha^3-\frac{115}{4608}\alpha^4
+\frac{67}{4608}\alpha^5-\frac{7}{768}\alpha^6+\dots.
                      \label{8.7}
\end{eqnarray}
{The series converge for all real $\alpha$. We omit the proof of the convergence which is not easy.}

The system \eqref{7.16} can be equivalently written in terms of the function $\psi(r,z)$. Indeed, as is seen from \eqref{8.2}, a solution $f(p,z)$ to the system \eqref{7.16} is related to $\psi$ by
$
\frac{1}{3}\ln\psi(f(p,z),z)+C=p.
$
Starting with this equation, we derive from \eqref{7.16} the system
\begin{equation}
\begin{aligned}
2c\psi'_r-3r(r^2+\gamma)&=0,\\
c^2\psi'_z{}^2-9c(r^2-\beta)\psi+\frac{9}{4}r^2(r^2+\gamma)^2&=0,
\end{aligned}
                                              \label{8.8}
\end{equation}
where $c$ is the constant from \eqref{8.3}.

By the change $\psi=\frac{1}{c}\tilde\psi$ of the unknown function, \eqref{8.8} is transformed to the same system with $c=1$.
Therefore we can assume $c=1$ without lost of generality, i.e.,
\begin{equation}
\begin{aligned}
2\psi'_r-3r(r^2+\gamma)&=0,\\
\psi'_z{}^2-9(r^2-\beta)\psi+\frac{9}{4}r^2(r^2+\gamma)^2&=0.
\end{aligned}
                                              \label{8.9}
\end{equation}
The equality \eqref{8.3} becomes
 $\alpha=\psi$.
The function $\psi(r,z)$ is defined and smooth in a neighborhood of the point $(1,0)$ and satisfies $\psi(1,0)=0$.
Let us show that $\psi(r,z)$ is an even function of $z$. Indeed, the second equation of the system \eqref{8.9} gives
$
 \psi'_z{}^2(1,0)+\frac{9}{4}r^2(1+\gamma(-\infty))^2=0.
$
Since $\gamma(-\infty)=-1$ by \eqref{8.4}, we obtain
\begin{equation}
\psi'_z(1,0)=0.
                                              \label{8.11}
\end{equation}
Then we differentiate the first equation of the system \eqref{8.9} with respect to $z$
\begin{equation}
 2\psi''_{rz}-3r\gamma'_z=0.
                                              \label{8.12}
\end{equation}
Since $\gamma$ depends on $p$ only,
$
\gamma'_z=\gamma' p'_z,
$
where $\gamma'=d\gamma/dp$. Together with \eqref{8.2}, this gives $\gamma'_z=\frac{\gamma'\psi'_z}{3\psi}$. Substituting this expression into \eqref{8.12} and setting $z=0$, we arrive to the linear first order ODE for the function $\psi'_z(r,0)$:
\begin{equation}
2\frac{d\psi'_z(r,0)}{d r}-3\frac{r\gamma'(p(r,0))}{3\psi(r,0)}\,\psi'_z(r,0)=0.
                                              \label{8.13}
\end{equation}
The coefficient $\frac{r\gamma'(p(r,0))}{3\psi(r,0)}$ { of the equation} is a bounded smooth function in a neighborhood of $r=1$ as is seen from \eqref{8.7}. Together with the initial condition \eqref{8.11}, the equation \eqref{8.13} implies
$\psi'_z(r,0)=0$.
The system \eqref{8.9} is invariant under the change $z\to-z$. Therefore the equality $\psi'_z(r,0)=0$ implies that $\psi(r,z)$ is an even function of $z$.

The system \eqref{8.9} allows us to compute term-by-term all Taylor coefficients of the function $\psi(r,z)$ at the point $(1,0)$.
In particular,
\begin{equation}
\begin{aligned}
\psi(r,z)&=\frac{3}{2}(r-1)^2+\frac{3}{2}z^2+\frac{9}{4}(r-1)^3+\frac{9}{4}(r-1)z^2+\frac{57}{32}(r-1)^4\\
&+\frac{45}{16}(r-1)^2z^2 +\frac{33}{32}z^4+\frac{9}{8}(r-1)^5+\frac{9}{4}(r-1)^3z^2+\frac{9}{4}(r-1)z^4+\dots
\end{aligned}
                              \label{8.14}
\end{equation}
Second and 3d order terms on the right-hand side of \eqref{8.14} are easily derived from \eqref{8.9},
and we used Maple for computing 4th and 5th order terms.

%
%
%

\begin{figure}
\noindent\includegraphics[scale=0.46]
{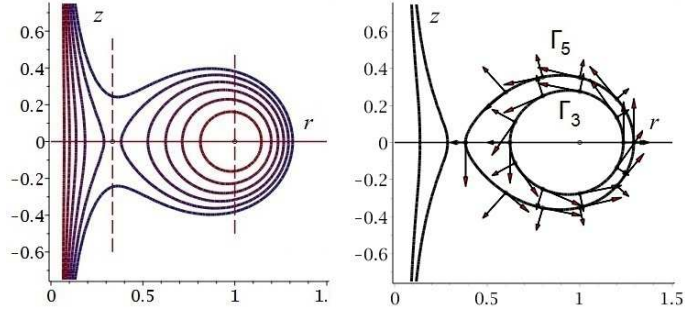}
\caption{\small
Structure of GF in a neighborhood of a minimum point of the pressure.
}
\label{fig:F2}
\end{figure}

{This GF is shown} on Figure~\ref{fig:F2}.
In our calculations we used the 5th order segment of the Taylor series of the function $\psi$,
i.e., we ignored  the remainder denoted by dots on the right-hand side of \eqref{8.14}.
Six isolines $\Gamma_i=\{(r,z): \psi(r,z)=\psi_i=0.04\,i\}\ (i=1,\dots,6)$ are drawn on the left-hand side of Figure~\ref{fig:F2}.
Each of the curves $\Gamma_1,\dots, {\Gamma}_5$ consists of two connected components while $\Gamma_6$ has one component.
The same curves $\Gamma_i$ are isolines of the pressure function, see \eqref{8.2}.
We set $C=-\frac{1}{3}\ln\psi_1$, thus the formula \eqref{8.2} becomes
$p(r,z)=\frac{1}{3}\ln\frac{\psi(r,z)}{\psi_1}$.
Each curve $\Gamma_i$ is the generatrix of the isobaric surface $M_{p_i}$, where $p_i=\frac{1}{3}\ln\frac{\psi_i}{\psi_1}$.
In our case, $p_i=\frac{1}{3}\ln i$.
The~velocity vector field $u$ is drawn on the right picture for $\Gamma_3$ and $\Gamma_5$.

{Observe the interesting phenomenon on Figure~\ref{fig:F2}: besides the minimum point $(1,0)$, the function $\psi(r,z)$ has the saddle point at $(r,z)=(1/3,0)$. The saddle point disappears when the 5th degree polynomial \eqref{8.14} is replaced with the corresponding 6th degree polynomial, and again appears at the same point for the 7th degree polynomial. We have no idea whether the phenomenon is essential for this type GFs, or it is just an artefact of ignoring higher degree terms in \eqref{8.14}.}

\section{Some open questions}
\label{sec:problems}

In our opinion, the main open question is: are there GFs on ${\mathbb R}^3$ which are not axisymmetric?
From an analytical point of view, this is a question on the consistency conditions {for} the system \eqref{5.12}--\eqref{5.15}.
The most expected answer to the question is ``yes''. However, the question is not easy because of the following.
For an axisymmetric GF, the constants $c$ and $C$ in the Bernoulli law \eqref{2.1} can be expressed through each other.
In other words, all particles living on an isobaric surface $M_p$ move with the same speed in the case of an axisymmetric GF.
Most likely, this statement is not true for a general (not axisymmetric) GF, at least we cannot prove it by local reasoning.
But maybe simple global arguments will help answer the question. For instance, if there exists a particle trajectory dense in $M_p$, then $\vert u\vert=\mbox{const}$ on $M_p$.

{We mostly} studied GFs locally in a neighborhood of a regular point.
The most interesting questions relate to GFs with regular compact isobaric hypersurfaces $M_p\subset{\mathbb R}^n$.
As mentioned in Introduction, application of the Gavrilov localization to such a flow gives a compactly supported GF on the whole of ${\mathbb R}^n$.
Since a compact regular isobaric hypersurface $M_p$ is endowed with a non-vanishing tangent vector field $u$, the Euler characteristic of $M_p$ is equal to zero. In~the most important 3D-case this means that $M_p$ is diffeomorphic to the two-dimensional torus~${\mathbb T}^2$.
The restriction of $u$ onto $M_p$ is a non-vanishing geodesic vector field. There are Riemannian metrics on ${\mathbb T}^2$ admitting a non-vanishing geodesic vector field, the corresponding example can be found in the class of so called double-twisted products \cite{Go}. But we are interested in metrics on ${\mathbb T}^2$ induced from the Euclidean metric of ${\mathbb R}^3$
by an embedding $i:{\mathbb T}^2\subset{\mathbb R}^3$. Apart from surfaces of revolution, we do not know any example related to the following~problem.

\begin{problem}
Classify triples $(i,u,\lambda)$, where $i:{\mathbb T}^2\subset{\mathbb R}^3$ is an embedding of the torus, $u$ is a non-vanishing geodesic vector field on ${\mathbb T}^2$ endowed with the Riemannian metric induced from the Euclidean metric of ${\mathbb R}^3$ by embedding $i$, and $\lambda>0$ is a smooth function on ${\mathbb T}^2$ satisfying the equations
\begin{equation}
\mbox{\rm div}\,u=u(\log\lambda),\quad II(u,u)=-\lambda,
                                              \label{10.1}
\end{equation}
where $II$ is the second quadratic form of $\,{\mathbb T}^2$.
\end{problem}

Equations \eqref{10.1} are obtained from \eqref{2.3} and \eqref{2.10} by setting $\lambda=\vert{\rm grad}\,p\,\vert$.

In this paper, we did not discuss the behavior of a GF near a critical point of the pressure. Such a discussion could be of great interest. For instance, it makes sense to study a GF $(u,p)$ with the Morse function $p$, i.e., all critical points of $p$ are non-degenerate. For such a flow, $M_p$ is still a regular hypersurface for a regular value of the pressure; but $M_p$ undergoes a Morse surgery when $p$ changes near a critical value of the pressure.
Which Morse surgeries are compatible with the Euler equations?


\end{document}